\documentclass{article}

\usepackage{microtype}
\usepackage{graphicx}
\usepackage{tabularx}
\usepackage{subcaption}
\usepackage{booktabs} % for professional 
\usepackage{makecell}
\usepackage{multirow}
\usepackage{colortbl}
\usepackage{xcolor}
\usepackage{hyperref}

\usepackage{enumitem}
\setlist{nosep} % 移除所有额外间距
\setlist[itemize]{
  topsep=0pt,
  partopsep=0pt,
  itemsep=3pt,
  parsep=0pt,
  leftmargin=*
}

\usepackage[preprint]{icml2026_modified}

\usepackage{amsmath}
\usepackage{amssymb}
\usepackage{mathtools}
\usepackage{amsthm}

\usepackage[capitalize,noabbrev]{cleveref}

\theoremstyle{plain}
\newtheorem{theorem}{Theorem}[section]
\newtheorem{proposition}[theorem]{Proposition}

\theoremstyle{definition}

\theoremstyle{remark}

\usepackage[textsize=tiny]{todonotes}

\icmltitlerunning{Rethinking Entropy Allocation in LLM-based ASR: Understanding the Dynamics between Speech Encoders and LLMs}

\begin{document}

\newcommand{\printAuthorInfo}{
  \par\vskip 0.5em
  \begin{center}
    \small
    % 直接显示机构信息（第一行）
    \textsuperscript{1}Advanced Intelligent Systems Group, NIO
    
    % 打印所有作者邮箱（第二行）
    \par\vskip 0.5em
    \texttt{ \{ryan.xie2, jiaqi.song2\}@nio.com }
    % Email: \{ryan.xie2, jiaqi.song2, guang.qiu, xianliang.wang, ray.lei3, jie.gao4, kenn.wu\}@nio.com
  \end{center}
}

\icmlsetsymbol{cor}{\textdagger}

\twocolumn[
  \icmltitle{Rethinking Entropy Allocation in LLM-based ASR: Understanding the Dynamics between Speech Encoders and Large Language Models}

  % It is OKAY to include author information, even for blind submissions: the
  % style file will automatically remove it for you unless you've provided
  % the [accepted] option to the icml2026 package.

  % List of affiliations: The first argument should be a (short) identifier you
  % will use later to specify author affiliations Academic affiliations
  % should list Department, University, City, Region, Country Industry
  % affiliations should list Company, City, Region, Country

  % You can specify symbols, otherwise they are numbered in order. Ideally, you
  % should not use this facility. Affiliations will be numbered in order of
  % appearance and this is the preferred way.
  \icmlsetsymbol{equal}{*}

  \begin{icmlauthorlist}
    \icmlauthor{Yuan Xie}{comp,equal}
    \icmlauthor{Jiaqi Song}{comp,equal}
    \icmlauthor{Guang Qiu}{comp}
    \icmlauthor{Xianliang Wang}{comp}
    \icmlauthor{Ming Lei}{comp}
    \icmlauthor{Jie Gao}{comp}
    \icmlauthor{Jie Wu}{comp,cor}
  \end{icmlauthorlist}

  \icmlaffiliation{comp}{NIO, China}

  % \icmlcorrespondingauthor{Jie Wu}{xxx@xxx.edu}

  % You may provide any keywords that you find helpful for describing your
  % paper; these are used to populate the "keywords" metadata in the PDF but
  % will not be shown in the document
  \icmlkeywords{Machine Learning, ICML}

  \vskip 0.3in
]

% this must go after the closing bracket ] following \twocolumn[ ...

% This command actually creates the footnote in the first column listing the
% affiliations and the copyright notice. The command takes one argument, which
% is text to display at the start of the footnote. The \icmlEqualContribution
% command is standard text for equal contribution. Remove it (just {}) if you
% do not need this facility.

% Use ONE of the following lines. DO NOT remove the command.
% If you have no special notice, KEEP empty braces:
% \printAffiliationsAndNotice{}  % no special notice (required even if empty)
% Or, if applicable, use the standard equal contribution text:
\printAffiliationsAndNotice{
  \icmlEqualContribution\quad \icmlCorresponding
}

\begin{abstract}
Integrating large language models (LLMs) into automatic speech recognition (ASR) has become a dominant paradigm. Although recent LLM-based ASR models have shown promising performance on public benchmarks, it remains challenging to balance recognition quality with latency and overhead, while hallucinations further limit real-world deployment. In this study, we revisit LLM-based ASR from an \textbf{entropy allocation} perspective and introduce three metrics to characterize how training paradigms allocate entropy reduction between the speech encoder and the LLM. %, we reveal inherent inefficiencies in prevailing approaches.
To remedy entropy-allocation inefficiencies in prevailing approaches, we propose a principled multi-stage training strategy grounded in capability-boundary awareness, optimizing parameter efficiency and hallucination robustness. Specifically, we redesign the pretraining strategy to alleviate the speech-text modality gap, and further introduce an iterative asynchronous SFT stage between alignment and joint SFT to preserve functional decoupling and constrain encoder representation drift. Experiments on Mandarin and English benchmarks show that our method achieves competitive performance with state-of-the-art models using only 2.3B parameters, while also effectively mitigating hallucinations through our decoupling-oriented design.
\end{abstract}

\section{Introduction}
\label{sec1}

\begin{figure*}[t]
    \centering
    \includegraphics[width=0.85\linewidth]{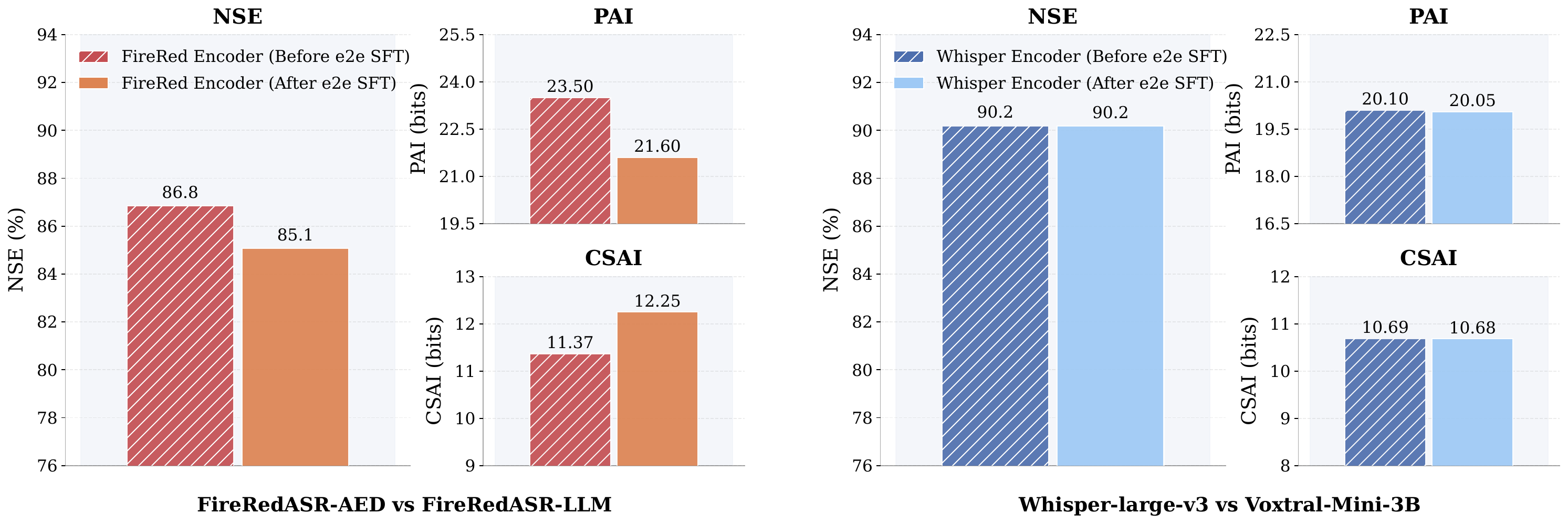}
    \vskip -1mm
    \caption{The shift in encoder-side metrics (NSE, PAI, and CSAI) before and after end-to-end joint training. By holding the encoder architecture constant within each group, we isolate the impact of joint training on encoder representations.}
    \label{fig1}
    \vskip -3mm
\end{figure*}

With the rapid advancement of large language models (LLMs), the mainstream automatic speech recognition (ASR) paradigm is shifting from traditional architectures~\cite{graves2006connectionist,graves2012sequence,chorowski2015attention,chan2016listen} toward LLM-based frameworks. Recent LLM-based ASR (hereafter LLM-ASR) models, including Seed-ASR~\cite{bai2024seed}, FireRedASR-LLM~\cite{xu2025fireredasr}, Voxtral Mini Transcribe~\cite{liu2025voxtral}, Fun-ASR~\cite{an2025funaudio}, Qwen3-ASR~\cite{shi2026qwen3}, are specialized for transcription. By leveraging LLMs' world knowledge and contextual reasoning to resolve semantic ambiguity, LLM-ASR models have achieved promising results on public benchmarks. In contrast to Large Audio-Language Models (LALMs)~\cite{chu2024qwen2,wu2025step,goel2025audio}, which target broad audio understanding tasks beyond ASR, LLM-ASR is more suitable for industrial speech applications, where accuracy, latency, computational overhead, and controllability are all critical concerns, and thus serves as the focus of this study.

Despite strong benchmark performance, LLM-ASR still faces two fundamental challenges in real-world deployment. The first challenge is the trade-off between efficiency and recognition quality. In lightweight settings, LLM-ASR models suffer not only from expected scaling-down degradation, but also from the inherent cost of bridging the speech-text modality gap, which consumes non-trivial model capacity~\cite{aghajanyan2023scaling,zhang2026instruction} and imposes a disproportionate tax on smaller models~\cite{endo2025downscaling}. 
The second challenge is hallucination. During joint training, the encoder is susceptible to being dominated by the LLM's gradients, inducing representation drift that causes the encoder to progressively rely on linguistic shortcuts at the expense of acoustic fidelity, amplifying hallucination risk~\citep{bai2024hallucination,zhou2024mitigating}.

To step beyond empirical observation and provide a principled account of these limitations, we revisit LLM-ASR from an entropy allocation perspective, viewing ASR as compressing high-entropy speech signals into low-entropy linguistic symbols. From this perspective, we propose a set of metrics (defined in Section~\ref{sec3.2}) to diagnose how different training paradigms allocate uncertainty reduction across the encoder--LLM interface. %This analysis offers a principled way to diagnose prevailing training paradigms, exposing their inefficiencies in parameter utilization efficiency and hallucination suppression. 
Specifically, normalized spectral entropy (NSE) quantifies the spectral entropy of encoder representations, while phonetic accessible information (PAI) and conditional semantic accessible information (CSAI) serve as probe-inspired proxies for linearly accessible phonetic and semantic information, respectively.
Figure~\ref{fig1} compares the metric shifts before and after end-to-end joint training across two representative ASR model families, revealing distinct behavioral regimes. 
From FireRedASR-AED to FireRedASR-LLM with the same Conformer encoder, joint training markedly reduces NSE, indicating increased entropy reduction by the encoder. Yet, the concurrent decrease in PAI and increase in CSAI suggest that the compression may be associated with a shift away from phonetic specialization toward semantic accessibility, consistent with the representation drift discussed above. In contrast, Voxtral’s Whisper encoder shows minimal changes during joint training, with all metrics remaining largely stable. Compared to FireRedASR, it exhibits higher NSE but lower PAI and CSAI, indicating a lighter entropy-reduction load on the encoder and a greater share of residual uncertainty handled by the LLM. While this mitigates representation drift, it increases LLM capacity demands and forces the LLM to resolve phoneme-level uncertainties outside its primary modeling strengths, leading to poor parameter efficiency.

These observations reveal two characteristic suboptimal regimes in prevailing LLM-ASR systems: one suffers representation drift and over-reliance on semantic priors, the other offloads residual uncertainty onto the LLM at the cost of parameter efficiency. 
Motivated by this diagnosis, we propose a capability-boundary-aware design principle that refines multi-stage training. We redesign the pretraining strategy to induce phoneme-anchored encoder representations with minimal acoustic uncertainty, providing an initialization that separates acoustic modeling from premature semantic anchoring and alleviates the modality gap. We further introduce an iterative asynchronous SFT (IA-SFT) stage between alignment and joint SFT, which explicitly narrows the speech-text modality gap and preserves functional decoupling while deepening encoder–LLM alignment. 
This refined multi-stage design encourages each module to devote more of its capacity to its designated role, thereby improving parameter efficiency and mitigating hallucinations. Our main contributions are summarized as follows.

\begin{itemize}
\item We present an entropy-based perspective for LLM-ASR, with metrics on encoder representations that characterize how training paradigms allocate entropy reduction between the encoder and the LLM.

\item  We propose a capability-boundary-aware multi-stage training paradigm that redesigns encoder pretraining and introduces an intermediate IA-SFT stage, jointly promoting clear module specialization and more stable joint training.
  
\item Our model achieves leading performance on Mandarin and English ASR benchmarks with only 2.3B parameters, while effectively mitigating hallucinations, facilitating efficient and robust real-world deployment.

\end{itemize}

\section{Related Work}
\label{sec2}

Recent advances in LLMs have redefined ASR, shifting the paradigm from acoustic transcription toward semantically informed generation. By coupling speech encoders with LLMs, LLM-ASR leverages rich world knowledge and long-context semantic modeling to enhance recognition. In recent research, the encoder–adaptor–LLM architecture has emerged as a standard backbone, with variations primarily arising from training strategies and functional priorities.

\noindent\textbf{LLM-based ASR models.}
Over the past two years, LLM-ASR models have emerged along diverse design principles. One line emphasizes lightweight adaptation: for instance, FireRedASR-LLM~\cite{xu2025fireredasr} achieves state-of-the-art (SOTA) performance on Mandarin and English benchmarks via LoRA fine-tuning with only 70K hours of training data. Another line of work scales up training with industrial-scale corpora. Representative studies such as Seed-ASR~\cite{bai2024seed} and Fun-ASR~\cite{an2025funaudio} are trained on tens of millions of hours and further leverage context-SFT and reinforcement learning to extend model capabilities. Moreover, a related trend leverages pretrained LALM backbones to build specialized ASR models --- for example, Qwen3-ASR~\cite{shi2026qwen3} built upon Qwen3-Omni~\cite{xu2025qwen3} and GPT-4o Transcribe derived from GPT-4o. These models inherit the audio understanding capabilities of their LALM backbones, offering enhanced ASR performance and improved inference efficiency suited to production deployment. Additionally, general-purpose LALMs such as GLM-4-Voice~\cite{zeng2024glm}, Step-Audio 2~\cite{wu2025step}, and Kimi-Audio~\cite{ding2025kimi} also serve as relevant baselines in ASR benchmarking, albeit with different target scenarios from ASR.

Although LLM-ASR models can achieve strong benchmark results, latency and deployment cost remain major constraints for real-time speech interaction. As a result, many works have released lightweight variants -- e.g., Fun-ASR-nano (0.8B)~\cite{an2025funaudio}, GLM-ASR-nano (1.5B), and Qwen3-ASR (0.8B/2.0B)~\cite{shi2026qwen3} -- yet these typically rely on straightforward capacity reduction without principled changes to the training paradigm, leaving a substantial performance gap relative to their larger counterparts. Moreover, the inherent modality gap between acoustic and textual representations inevitably amplifies hallucinations~\cite{peng2024survey}, posing challenges to the stability and controllability required for real-world deployment.

\noindent\textbf{Divergent paths in uncertainty resolution.}
Most LLM-ASR models adopt the encoder--adaptor--LLM architecture, where divergent paths in uncertainty resolution arise from differences in encoder pretraining, cross-modal alignment, and joint-training strategies. 
During encoder pretraining, many studies employ supervised ASR objectives such as Connectionist Temporal Classification (CTC), Attention-based Encoder-Decoder (AED), or their hybrid variants~\cite{radford2022whisper,xu2025fireredasr,xia2026uni}, which direct the encoder to capture transcription-relevant acoustic structure and resolve a substantial portion of acoustic uncertainty through representations optimized for transcription-relevant discriminability.
By contrast, several works~\cite{bai2024seed,an2025funaudio} leverage self-supervised learning (SSL) objectives such as Best-RQ~\cite{chiu2022self} to initialize encoders, learning general-purpose acoustic representations without collapsing toward linguistic units. Such SSL pretraining is often followed by supervised objectives (e.g., AED or CTC) before integration with the LLM. These strategies lead to distinct internal operating regimes, as analyzed in Appendix~\ref{sec:appendix_dynamics}.

Following encoder pretraining, LLM-ASR models typically proceed through an alignment stage and then joint training with the LLM under natural-language supervision, which further shapes how uncertainty is resolved. Textual supervision drives the encoder toward transcription-relevant acoustic cues, progressively suppressing acoustic variations that are weakly associated with lexical discrimination. As the system maps high-dimensional continuous speech representations into low-dimensional symbolic sequences, the granularity retained in the encoder representations directly determines how much entropy reduction must be handled by the downstream LLM, thereby dictating its effective capacity. In principle, the optimal scale of the LLM-ASR model should be jointly informed by its training paradigm, yet existing approaches remain largely data-driven and pay limited attention to this relationship.

\section{Methodology}
\label{sec3}

\subsection{Preliminary}
\label{sec3.1}

\noindent\textbf{LLM-based ASR as conditional generation.}
Let $x$ denote the input speech features (e.g., log-Mel spectrograms), $t=(t_1,\ldots,t_M)$ denote the text instruction prompt, and $y=(y_1,\ldots,y_N)$ denote the output token sequence representing the transcription. The LLM-based ASR model formulates the ASR task as conditional language modeling:
\begin{equation}
p_\theta(y \mid x, t)
\;=\;
\prod_{n=1}^{N} p_\theta\!\left(y_n \mid y_{<n}, x, t\right),
\label{eq:llm_asr_factorization}
\end{equation}
where the overall parameters are $\theta=\{\phi,\psi,\omega\}$, consisting of the speech encoder parameters $\phi$, the adaptor parameters $\psi$, and the LLM decoder parameters $\omega$. The speech encoder $\mathcal{E}_{\phi}$ maps acoustic features to continuous representations:
\begin{equation}
E \;=\; \mathcal{E}_{\phi}(x), \qquad E \in \mathbb{R}^{L \times d_e},
\label{eq:encoder_hidden}
\end{equation}
where $L$ is the frame length and $d_e$ is the encoder hidden size. To match the LLM embedding space $\mathbb{R}^{d_m}$, we use an adaptor $\mathcal{A}_{\psi}$ for modality projection:
\begin{equation}
Z \;=\; \mathcal{A}_{\psi}(\text{Down}_{r}(E)), \qquad Z \in \mathbb{R}^{L' \times d_m},
\label{eq:adaptor_projection}
\end{equation}
where $E$ is downsampled (e.g., by frame concatenation) to length $L'$ using a downsampling operator $\text{Down}_{r}(\cdot)$ with factor $r$ before being fed to the adaptor.

\noindent\textbf{Embedding composition and unified decoding.}
Let $\mathcal{W}(\cdot)$ denote the LLM text token embedding operator. Given a text prompt with tokens $t=(t_1,\ldots,t_M)$, a unified continuous input sequence is constructed by replacing a dedicated speech placeholder token \texttt{<speech>} with a sequence of projected speech embeddings $Z$. Concretely, after embedding lookup and substitution, the LLM input is
\begin{equation}
\scalebox{0.95}{$
S_{in}
=
\big[
\mathcal{W}(t_1),\ldots,\mathcal{W}(t_M)\;;
Z_1,\ldots,Z_{L'}
\big]
\;\in\;
\mathbb{R}^{(M+L') \times d_m},
$}
\label{eq:embedding_concat}
\end{equation}
where $\{Z_\ell\}_{\ell=1}^{L'}$ are speech embeddings produced by the encoder--adaptor stack and occupy the position of the \texttt{<speech>} placeholder in the input sequence.

\noindent\textbf{Autoregressive formulation.}
Given speech $x$ and prompt $t$, we construct a unified continuous input $S_{in}(x,t)$ by substituting the \texttt{<speech>} placeholder with speech embeddings. The LLM decoder $\mathcal{D}_\omega$ then generates $y$ autoregressively:
\begin{equation}
p_\theta(y \mid x, t)
\;=\;
\prod_{n=1}^{N}
p_\omega\!\left(y_n \mid y_{<n}, S_{in}(x,t) \right).
\label{eq:llm_asr_generation}
\end{equation}
% \noindent\textbf{Training objective (teacher forcing).}
% Given paired training data $\{(x^{(i)}, t^{(i)}, y^{(i)})\}_{i=1}^{B}$, we minimize the token-level negative log-likelihood under teacher forcing:
% \begin{equation}
% \mathcal{L}_{\mathrm{CE}}(\theta)
% =
% -\frac{1}{B}
% \sum_{i=1}^{B}
% \sum_{n=1}^{N^{(i)}}
% \log p_\theta\!\left(
% y^{(i)}_n \mid y^{(i)}_{<n}, (x^{(i)}, t^{(i)})
% \right).
% \label{eq:ce_loss}
% \end{equation}
% The loss is applied only to target text tokens, while $(x^{(i)},t^{(i)})$ serves as conditioning context.

% \noindent\textbf{Representation space and uncertainty allocation.}
% Eq.~\eqref{eq:encoder_hidden} --~\eqref{eq:embedding_concat} characterize the critical role of the representation $E$, which acts as the continuous \textbf{interface} bridging the encoder and the LLM.
% Viewing ASR as an entropy-reduction process, the uncertainty required to transcribe a given speech input into deterministic text is invariant. Consequently, the entropic state of $E$ serves as a faithful indicator of the entropy allocation across modules: any entropy removed by the encoder reduces the remaining entropy for the LLM.
\noindent\textbf{Entropy allocation at the encoder--LLM interface.}
The representation $E$ defined above serves as the interface between the encoder and the LLM. Since the total uncertainty to be resolved for a given utterance is invariant, the two modules can be viewed as operating under a zero-sum entropy budget: uncertainty absorbed by the encoder directly reduces what the LLM must resolve. Analyzing the properties of $E$ thus provides a principled lens into the entropy allocation across modules.

\subsection{Metrics on Encoder Representations}
\label{sec3.2}

Building on this entropy-based perspective, we focus on the encoder representation as the interface between the speech encoder and the LLM. For each utterance, let $E'$ denote the valid-frame representation obtained from $E$ after removing padded positions according to the sequence mask. We characterize $E'$ from two complementary aspects: the overall spectral entropy retained in representations, and proxy estimates of how uncertainty reduction remains accessible in phonetic and semantic target spaces.

\noindent\textbf{Normalized spectral entropy (NSE).} We perform singular value decomposition (SVD) on the encoder representation matrix $E' = U \Sigma V^\top$, where $\Sigma = \mathrm{diag}(\sigma)$ is the diagonal matrix of the singular value vector $\sigma = (\sigma_1, \ldots, \sigma_d)^\top$ arranged in descending order, with $d=\min(L,d_e)$. Normalizing the singular values by their $\ell_1$-norm,
\begin{equation}
\bar{\sigma}_i = \frac{\sigma_i}{\|\sigma\|_1} = \frac{\sigma_i}{\sum_{j=1}^{d} \sigma_j}.
\end{equation}
Treating $\bar{\sigma}_i$ as the singular value distribution~\cite{roy2007effective}, we define the normalized spectral entropy~\cite{yang2005coefficient} as its Shannon entropy:% scaled by $1/\log d$:
\begin{equation}
\mathrm{NSE}(E')
=
-\frac{1}{\log d}
\sum_{i=1}^{d} \bar{\sigma}_i \log (\bar{\sigma}_i).
\label{eq:svd_entropy}
\end{equation}
It characterizes the \textbf{global spectral geometry} of $E'$. Lower NSE indicates a more anisotropic and more strongly compressed representation, whereas higher NSE indicates a more isotropic representation with higher retained entropy.

\noindent\textbf{Phonetic and conditional semantic accessible information.}
While NSE characterizes the spectral entropy retained in $E'$, it does not indicate how much uncertainty reduction is accessible from that representation in phonetic and semantic target spaces. We therefore introduce two complementary utterance-level accessible-information proxies: \textbf{phonetic accessible information (PAI)}, measuring how much phonetic information is linearly accessible from $E'$; and
\textbf{conditional semantic accessible information (CSAI)}, measuring how much additional semantic information is accessible beyond what is already captured by the phonetic target spaces.

To derive them, we first summarize the valid-frame representation $E'$ via temporal mean and standard deviation pooling, followed by Principal Component Analysis (PCA) and standardization, yielding a vector $u$. For the phonetic target $P$, we convert the reference transcript into phoneme tokens, form an $\ell_1$-normalized bag-of-phones count vector, and apply the same PCA-standardization pipeline.
For the semantic target $C$, we encode the transcript using a frozen Qwen3-Embedding-8B model with last-token pooling and $\ell_2$ normalization, followed by PCA and standardization. 
In practice, $P$ and $C$ are projected to the same dimension.

Let $q=[u^\top,P^\top,C^\top]^\top$ and let $\hat{\Sigma}=\mathrm{Cov}(q)$ denote the empirical covariance estimated over the evaluation set. We use a ridge-regularized covariance
$\tilde{\Sigma}=\hat{\Sigma}+\lambda I$, where $\lambda>0$ ensures numerical stability. All covariance blocks below are taken as principal submatrices of $\tilde{\Sigma}$. Under a joint Gaussian approximation on $(u,P,C)$, we define
\begin{equation}
      \mathrm{PAI}(E')
      =
      \left[
        \frac{1}{2\log 2}
        \log
        \frac{\det \tilde{\Sigma}_{uu}\,\det \tilde{\Sigma}_{PP}}
             {\det \tilde{\Sigma}_{[u,P]}}
      \right]_+,
      \label{eq:pai}
\end{equation}
where $[\cdot]_+=\max(\cdot,0)$ clips residual negative values caused by numerical error. This quantity serves as a regularized Gaussian accessible-information proxy for the mutual information between $u$ and $P$.

%Similarly, letting $\tilde{\Sigma}_{\cdot\mid P}$ denote regularized conditional covariance matrices defined by the Schur complement,
% \begin{equation}
%   \tilde{\Sigma}_{AA\mid P}
%   =
%   \tilde{\Sigma}_{AA}
%   -
%   \tilde{\Sigma}_{AP}\tilde{\Sigma}_{PP}^{-1}\tilde{\Sigma}_{PA},
%   \qquad
%   A\in\{u,C,[u,C]\},
%   \label{eq:schur_cond}
% \end{equation}
Similarly, letting $\tilde{\Sigma}_{\cdot\mid P}$ denote the corresponding regularized conditional covariance matrices given $P$, computed from the same joint covariance $\tilde{\Sigma}$, we define
\begin{equation}
      \mathrm{CSAI}(E')
      =
      \left[
        \frac{1}{2\log 2}
        \log
        \frac{\det \tilde{\Sigma}_{uu\mid P}\,\det \tilde{\Sigma}_{CC\mid P}}
             {\det \tilde{\Sigma}_{[u,C]\mid P}}
      \right]_+,
      \label{eq:csai}
\end{equation}
% which serves as a regularized Gaussian accessible-information proxy for the conditional mutual information between $u$ and $C$ given $P$. Intuitively, CSAI captures transcript-level information in $u$ that is not already explained by the phonetic target.
which serves as a regularized Gaussian proxy for the conditional accessible-information between $u$ and $C$ given $P$. 
Intuitively, CSAI captures semantic information in $u$ unexplained by phonetic structure, by removing the portion of $u$--$C$ dependence mediated by the phonetic target.

Together, NSE, PAI, and CSAI form a practical diagnostic: NSE quantifies the spectral entropy, PAI quantifies phonetic accessibility, and CSAI quantifies semantic accessibility beyond the phonetic target\footnote{PAI and CSAI are regularized proxies for relative comparison across training paradigms, not exact mutual-information estimates.}. Jointly analyzing these metrics reveals the entropy allocation dynamics across model components. As discussed in Section~\ref{sec1} regarding Figure~\ref{fig1}, the trends of these metrics can reveal suboptimal behavior during training. In a desirable training trajectory, we expect NSE not to remain persistently high, thereby reducing the capacity demand on the LLM; meanwhile, we expect CSAI not to exhibit a sustained rise at the expense of declining PAI during joint training, as this often indicates that the improvement stems not from representation refinement but from linguistic shortcuts gained by sacrificing acoustic fidelity---a trend that amplifies hallucination risk.
%As illustrated in Figure~\ref{fig1}, growth in CSAI is expected as encoder representations participate in downstream language modeling. The concerning pattern is one in which CSAI rises at the cost of continued PAI decline (as shown by the FireRedASR regime shift in Figure~\ref{fig1}), indicating that the gain in semantic accessibility stems not from genuine representational enrichment but from erosion of acoustic–phonetic structure--a trajectory that amplifies hallucination risk.

\subsection{Design Principle}
\label{sec3.3}

With these metrics, we can provide an intuitive diagnosis of entropy allocation in LLM-ASR models. Figure~\ref{fig1} exposes two contrasting suboptimal modes, which have been analyzed in detail in Section~\ref{sec1}. We therefore adopt a \textbf{capability-boundary-aware} design principle with two requirements:
\begin{itemize}%[nosep,leftmargin=*]
\item The encoder should be guided toward low-entropy, acoustically grounded representations before being exposed to LLM-dominated joint optimization. This narrows the modality gap early and reduces the risk that subsequent training resorts to semantically biased shortcuts.
\item During joint optimization, the functional boundary between modules should be explicitly maintained, ensuring that further encoder compression proceeds along an acoustic-grounded direction rather than at the expense of acoustic representation quality.
\end{itemize}
% The first principle motivates our pretraining strategy, while the second motivates the IA-SFT stage.

\subsection{Multi-Stage Training Paradigm}
\label{sec3.4}

As shown in Figure~\ref{fig_sft}, we illustrate our proposed training pipeline alongside a comparison with the traditional one. Our design features two core components: phoneme-level CTC pretraining, and an additional IA-SFT stage that runs asynchronously in parallel with the pretraining phase.

\begin{figure}[t]
    \centering
    \includegraphics[width=\linewidth]{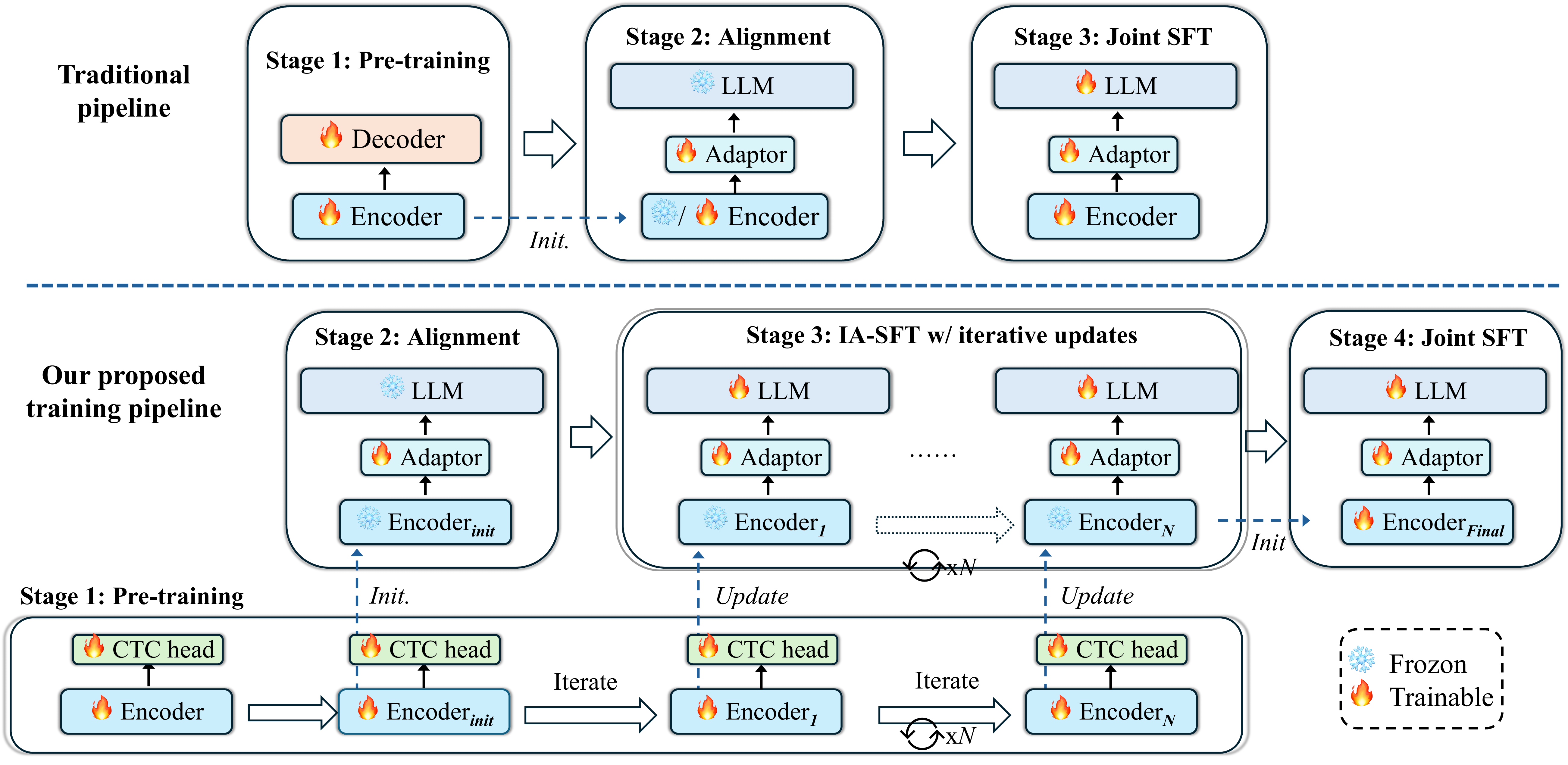}
    \caption{Comparison of our multi-stage training design with the traditional training pipeline.}
    \label{fig_sft}
    \vskip -5.5mm
\end{figure}

\noindent\textbf{Phoneme-level CTC Pretraining.}
The pretraining stage starts from a Conformer encoder initialized with FireRedASR-AED weights, replacing the autoregressive decoder with a lightweight linear CTC head trained under the CR-CTC~\cite{yao2024cr} objective. Motivated by prior findings that phoneme-based representations can offer a more universal and acoustically grounded interface than grapheme or subword units~\cite{yusuyin2025whistle}, we adopt phoneme-level rather than word-level supervision.

It is worth noting that we intentionally adopt CTC-based pretraining rather than the more prevalent AED-style or semi-supervised alternatives, driven by two structural considerations. First, the CTC objective, through its peaky behavior~\cite{zeyer2021does} and monotonic alignment constraint, encourages the encoder to compress continuous speech into representations more aligned with the underlying token sequence~\cite{zhou2025cjst}, bridging the structural gap with discrete text. Second, the lightweight CTC head acts as a capacity bottleneck, pushing the encoder toward low-entropy representations. Together, these properties narrow the speech--text modality gap at the representation level, while the lower-entropy interface relaxes the capacity demand on the downstream LLM.

\begin{figure*}[!htbp]
    \centering
    \includegraphics[width=0.95\linewidth]{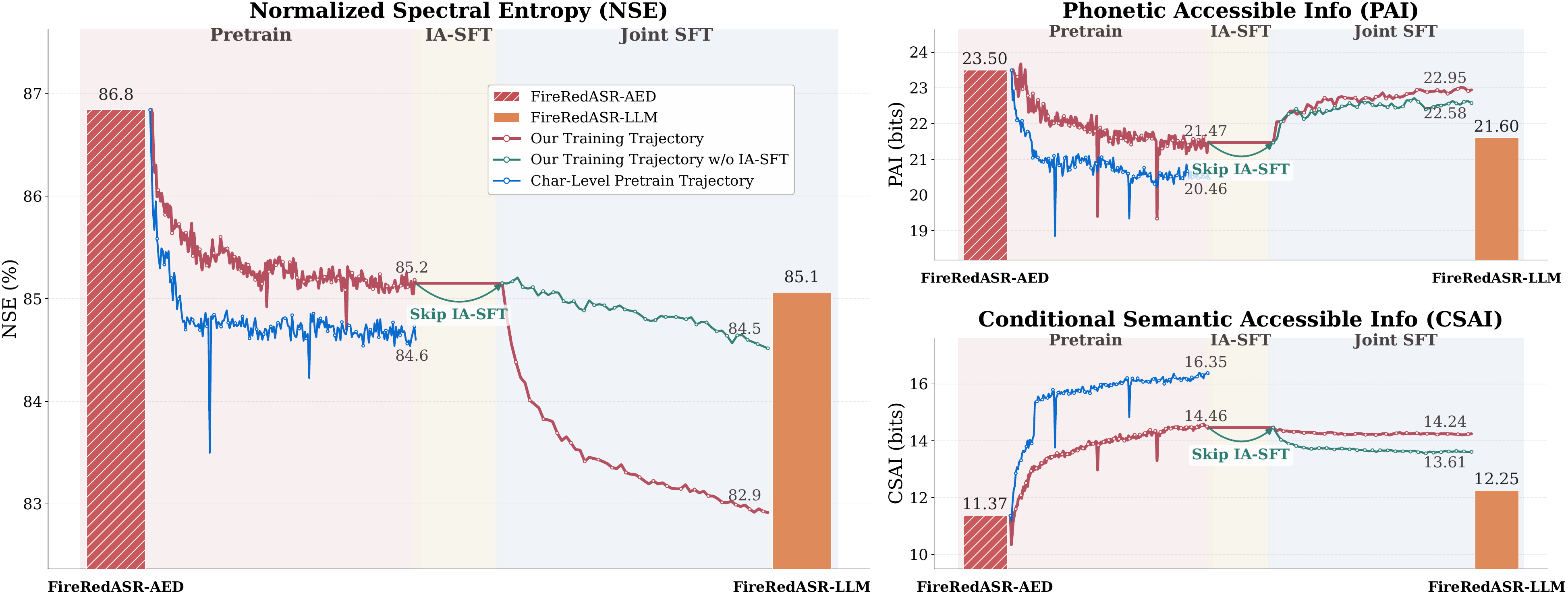}
    \vskip -1mm
    \caption{Comparison of the three metrics along our training trajectory and the FireRedASR-AED → FireRedASR-LLM transition, sharing the same encoder architecture. ``Our Training Trajectory'' corresponds to the ``phoneme-level pretrain → IA-SFT → SFT'' pipeline.}
    \label{fig_svd}
    \vskip -4mm
\end{figure*}

\noindent\textbf{Alignment and IA-SFT.}
In traditional training paradigms, alignment and joint SFT are conducted sequentially after pretraining completes. To explicitly constrain representation drift, we introduce an additional IA-SFT stage between alignment and joint SFT. As shown in Figure~\ref{fig_sft}, neither alignment nor IA-SFT waits for pretraining to complete; instead, they proceed asynchronously in parallel with pretraining. To govern encoder initialization and updates for the alignment and IA-SFT stages, we monitor representation drift using Centered Kernel Alignment (CKA)~\cite{kornblith2019similarity}, which is applied to compare the evolving encoder with a reference checkpoint that is initialized and updated throughout pretraining. The detailed procedure is as follows:

\textbf{(1) Alignment:} Once pretraining has run for a sufficient number of steps, we snapshot the encoder to initialize the reference checkpoint. As pretraining continues, we monitor the CKA score between the evolving encoder and reference checkpoint. When the CKA score first drops below a preset threshold, we take a new snapshot that both initializes the audio encoder for alignment and updates the reference checkpoint. During alignment, both the encoder and LLM are frozen, and only the adaptor is trained;

\textbf{(2) IA-SFT:} After alignment completes, the model enters the IA-SFT stage, where the encoder is frozen while the adaptor and LLM are trained. In essence, IA-SFT also serves an alignment purpose---it can be viewed as a preparatory curriculum that strengthens the LLM's capacity to comprehend speech representations before joint SFT. To expose the LLM to diverse encoder representations, we perform CKA-driven iterative encoder hot-swapping: whenever the CKA score drops below the threshold, the latest encoder checkpoint from the ongoing pretraining is hot-swapped into the model encoder under IA-SFT training, and the reference checkpoint is updated accordingly. The CKA constraint ensures that each swap delivers a non-trivial representational evolution, enabling the LLM to learn from progressively refined encoders in a curriculum-like fashion. Throughout the IA-SFT stage, this iterative hot-swapping cycle repeats until pretraining concludes. Further details of IA-SFT are supplemented in Appendix~\ref{appendA4}, including a more detailed procedural description, the formal definition of CKA, specific training configurations, and an analysis of why direct encoder hot-swapping without realignment is viable.

\noindent\textbf{Joint SFT.}
After IA-SFT, we proceed to joint SFT, where all modules are trained end-to-end. By this point, the modality gap has been minimized, the acoustic-grounded properties of speech representations have been preserved, and the LLM has developed a robust capacity to comprehend audio representations—collectively reducing the risk of representation drift during joint training. Notably, although an additional stage is introduced, both alignment and IA-SFT run asynchronously in parallel with pretraining from its midpoint, ensuring that our overall pipeline remains time-efficient in large-scale industrial settings.

\begin{table*}[t]
\centering
\caption{
Comparison with advanced baselines on public benchmarks. ``-'' denotes unsupported dialects.
}
\label{tab:main_benchmark}
\resizebox{\textwidth}{!}{
\renewcommand{\arraystretch}{1.0}
%{\small
\begin{tabular}{lcccccc|c}
\toprule
  & \textbf{Fun-ASR-nano} & \textbf{GLM-ASR-nano} & \textbf{Qwen3-ASR-1.7B} & \textbf{FireRedASR-LLM} & \textbf{Step-Audio2-mini} & \textbf{Qwen3-Omni-Inst} & \textbf{Ours} \\
  % & \textbf{Fun-ASR} & \textbf{GLM-ASR} & \textbf{Qwen3-ASR} & \textbf{FireRedASR} & \textbf{Step-Audio2} & \textbf{Qwen3-Omni} & \textbf{Ours} \\
  % & \textbf{nano} & \textbf{nano} & \textbf{1.7B} & \textbf{LLM} & \textbf{mini} & \textbf{Instruct} &  \\
  
  \midrule
  Model Size & 0.8B (↓) & 1.5B (↓) & 2.0B (↓) & 8B+ (↑) & 8B+ (↑) & 30B-A3B (↑) & 2.3B \\
  \midrule
\multicolumn{8}{l}{\textit{\textbf{Mandarin}}} \\
AISHELL-1 \textit{dev $\mid$ test} & 1.59 $\mid$ 1.81 & 2.40 $\mid$ 2.41 & 1.40 $\mid$ 1.51 & 0.71 $\mid$ 0.73 & 0.76 $\mid$ 0.81 & 0.86 $\mid$ 0.92 & \textbf{0.45} $\mid$ \textbf{0.59} \\
AISHELL-2-ios \textit{dev $\mid$ test} & 2.62 $\mid$ 2.73 & 3.21 $\mid$ 3.45 & 2.41 $\mid$ 2.60 & \textbf{2.08} $\mid$ \textbf{2.12} & 2.24 $\mid$ 2.29 & 2.11 $\mid$ 2.31 & 2.32 $\mid$ 2.45 \\
AISHELL-2021-\textit{Eval} \textit{A $\mid$ C $\mid$ D} & 4.75 $\mid$ 4.29 $\mid$ 2.33 &7.25 $\mid$ 9.48 $\mid$ 3.40  &4.22 $\mid$ 3.51 $\mid$ 1.82 & 12.61 $\mid$ 4.06 $\mid$ 7.38 &4.54 $\mid$ 3.69 $\mid$ 2.34 & 5.19 $\mid$ 3.34 $\mid$ \textbf{1.66} &  \textbf{3.45} $\mid$ \textbf{1.71} $\mid$ 2.53   \\

% WeNetSpeech \textit{meeting $\mid$ net} & 4.68 $\mid$ 5.22 & 6.87 $\mid$ 5.72  & 4.00 $\mid$ 4.13 & 3.50 $\mid$ 3.80 & 4.23 $\mid$ 4.63 & 3.92 $\mid$ 3.85 & 4.90 $\mid$ 4.71 \\
% SpeechIO  & 2.78 & 3.17  & 2.55 & 2.28 & 3.41  & 2.33 & 2.62 \\
\midrule
\multicolumn{8}{l}{\textit{\textbf{Chinese Dialect}}} \\
WeNetSpeech-Chuan \textit{easy $\mid$ hard}  & 12.69 $\mid$ 23.76 & 20.95 $\mid$ 33.61 & 11.40 $\mid$ \textbf{20.35} & 12.14 $\mid$ 24.76 & 13.99 $\mid$ 25.35 & 14.13 $\mid$ 25.16 & \textbf{10.94} $\mid$ 21.93 \\
WeNetSpeech-Yue \textit{short $\mid$ long}  & 7.31 $\mid$ 10.02 & 16.78 $\mid$ 13.97 & 5.79 $\mid$ \textbf{8.00} & - $\mid$ - & 7.78 $\mid$ 8.44 & 6.97 $\mid$ 8.60  & \textbf{5.22} $\mid$ 9.45 \\
KeSpeech  & 7.18 & 9.59  & 4.98 & \textbf{3.53} & 3.98 & 6.00 & 4.56 \\
  
\midrule
\multicolumn{8}{l}{\textit{\textbf{English}}} \\
LibriSpeech-dev \textit{clean $\mid$ other} & 1.63 $\mid$ 4.06 & 1.82 $\mid$ 3.93 & 1.54 $\mid$ 3.14 & 1.25 $\mid$ 2.92 & \textbf{1.06} $\mid$ 2.48 & 1.08 $\mid$ \textbf{2.10} & 1.11 $\mid$ 2.57 \\
LibriSpeech-test \textit{clean $\mid$ other} & 1.63 $\mid$ 4.35 & 1.96 $\mid$ 4.29 & 1.56 $\mid$ 3.49 & 1.37 $\mid$ 3.36 & 1.22 $\mid$ 2.61 & \textbf{1.15} $\mid$ \textbf{2.38} & 1.23 $\mid$ 2.63 \\
VoxPopuli \textit{dev $\mid$ test} & 7.86 $\mid$ 7.70 & 8.78 $\mid$ 8.52 & 7.58 $\mid$ 7.42 & 10.65 $\mid$ 10.26 & 8.86 $\mid$ 8.37 & 6.86 $\mid$ 6.75 & \textbf{6.25} $\mid$ \textbf{6.22} \\
% TED-LIUM & 3.66 & 3.18  & 2.70 & 3.89 & 3.35 & 3.07 & 3.61 \\
% MLS-English & 6.80 & 5.32  & 4.93 & 5.32 & 4.37 & 4.04 & 4.85 \\
 
\midrule
\multicolumn{8}{l}{\textit{\textbf{Chinese--English Code-switch}}} \\
CS-Dialogue & 5.37 & 6.15  & 5.44 & 5.10 & 9.46 & 8.51  & \textbf{4.99} \\
ASCEND  & 11.91 & 12.29  & \textbf{10.87} & 11.25 & 13.50 & 18.68 & 11.79 \\
\midrule%\midrule
% \rowcolor{black!6}
% \multicolumn{8}{l}{\textbf{Ours vs. Baselines}} \\
% \addlinespace[1pt]
% \rowcolor{black!6}
\textbf{Avg. CER/WER}  & \textbf{6.28} & \textbf{8.71}  & \textbf{5.45} & \textbf{6.46} & \textbf{6.19} & \textbf{6.24} & \textbf{5.12}  \\
\bottomrule
\end{tabular}
}
%}
\vskip -3mm
\end{table*}

\subsection{Empirical Analysis of Metric Dynamics}
\label{Metric_Empirical}

To empirically validate whether our multi-stage training paradigm adheres to the design principles, we trace the metrics across training stages and compare them against the direct transfer path from FireRedASR-AED to FireRedASR-LLM, as illustrated in Figure~\ref{fig_svd}. The alignment stage is omitted in the figure, as no metric change occurs during this stage. Together, the metric trajectories reveal not only how much entropy reduction the encoder assumes across stages, but also whether such compression remains acoustically grounded or drifts toward semantic bias.

During phoneme-level CTC pretraining, NSE drops noticeably, as the CTC bottleneck compels the encoder to form compact, low-entropy representations. Meanwhile, PAI decreases while CSAI rises steadily. Given that only phoneme-level supervision is provided, the PAI decline does not reflect a loss of acoustic information; rather, it is a structural consequence of the peaky, sparse output distributions enforced by CTC. Such distributions bias encoder representations toward discrete, symbol-like forms, which dilute the globally accessible linear phonetic structure after pooling and PCA. The moderate CSAI increase follows naturally, as monotonic alignment progressively drives representations toward finer token-level granularity. Moreover, we present a comparison between phoneme-level and character-level pretraining variants. Phoneme-level pretraining consistently achieves higher PAI and lower CSAI than its character-level counterpart throughout training, suggesting a clear advantage in instilling a robust acoustic foundation while suppressing premature semantic anchoring in the encoder.

During joint SFT, NSE continues to decrease along our training trajectory, indicating that end-to-end optimization further drives the encoder toward lower-entropy representations and progressively relieves downstream capacity pressure on the LLM. Furthermore, a key pattern emerges in our trajectory: PAI recovers notably while CSAI remains flat or slightly declines --- consistent with our design expectations. Since joint optimization begins from an already-narrowed modality gap and a well-aligned interface, it mitigates representation drift and instead promotes a desirable cross-modal division of labor. Specifically, LLM gradients incentivize the encoder to refine phonetic-acoustic representations rather than exploit linguistic shortcuts, while the encoder refrains from absorbing semantic functions better delegated to the LLM. This emergent modular separation reflects efficient parameter utilization across the architecture.

To quantify the contribution of IA-SFT, we present an ablation in which this stage is removed (green curves in Figure~\ref{fig_svd}). Without IA-SFT, NSE decays more slowly relative to our full training trajectory, while both PAI and CSAI converge to lower final values --- confirming degraded linear accessibility along both acoustic and semantic dimensions, or equivalently, a reduction in the effective signal-to-noise ratio of encoder representations. These results underscore the critical role of IA-SFT as a representational buffer: by progressively deepening the alignment between the LLM's embedding manifold and encoder representations, it both preserves the acoustic grounding established during pretraining and shields against representation drift in the subsequent joint optimization phase.

\section{Experiments}
\label{sec4}

% The details of the training data and configurations are provided in Appendices~\ref{appendA1} and~\ref{appendA3}, respectively.
 
\subsection{Evaluation setting}
We evaluate on public ASR benchmarks covering Mandarin~\cite{bu2017aishell,du2018aishell}, Chinese dialects~\cite{tang2021kespeech,dai2025wenetspeech,li2026wenetspeech}, English~\cite{panayotov2015librispeech,wang2021voxpopuli}, and Chinese--English code-switching~\cite{lovenia2022ascend,zhou2025cs} scenarios. We report Character Error Rate (CER) for Chinese and Word Error Rate (WER) for English. Beyond LLM-ASR baselines, we include two large-scale LALMs~\cite{xu2025qwen3,wu2025step} as references. Despite their significantly higher inference cost and overhead, they help assess our model's competitiveness beyond the lightweight setting. All results are evaluated with the unified \textit{WeTextProcessing}%\footnote{\footnotesize \url{https://github.com/wenet-e2e/WeTextProcessing}} 
text-normalization toolkit. All models are evaluated in offline decoding mode, and our model uses beam search with a beam size of 3.

\subsection{Main Recognition Results}

Table~\ref{tab:main_benchmark} compares our model with advanced ASR baselines~\cite{an2025funaudio,shi2026qwen3,xu2025fireredasr,wu2025step,xu2025qwen3}. With only 2.3B parameters, our model achieves competitive results across multilingual benchmarks, surpassing several industrial-scale models with over 8B parameters on multiple scenarios. Beyond classic benchmarks such as AISHELL and LibriSpeech, our model attains SOTA performance on entity-dense sets like AISHELL-2021-Eval (in-car, telephony), suggesting that aligning the LLM to low-entropy speech representations does not cause catastrophic forgetting of world knowledge. Furthermore, our model achieves leading results on dialect benchmarks, indicating strong robustness to phoneme shifts induced by acoustic variations. This behavior is supported by the elevated PAI values in Fig.~\ref{fig_svd}, validating our acoustic-grounded design principle. Our model also performs well on code-switching benchmarks, owing in part to the phoneme-level pre-training strategy, which leverages language-agnostic phonetic representations shared across languages, thereby enabling more robust modeling under code-switching conditions.

\subsection{Hallucination Mitigation}

  \begin{table}[t]
  \centering
  \caption{Hallucination rate on different scenario benchmarks.}
  \label{tab:hallucination_rate}
  \resizebox{\columnwidth}{!}{%
  \begin{tabular}{lcccc}
  \toprule
  Model & Mandarin & Dialect & English & Code-switch \\
  \midrule
  Fun-ASR-nano      & 0.018\% & 0.217\% & 0.014\% & 0.397\% \\
  GLM-ASR-nano      & 0.030\% & 0.201\% & 0.014\% & 0.315\% \\
  Qwen3-ASR-1.7B    & 0.018\% & \textbf{0.120\%} & 0.014\% & 0.345\% \\
  FireRedASR-LLM    & 0.053\% & 0.228\% & 0.014\% & 0.324\% \\
  Step-Audio2-mini  & 0.020\% & 0.194\% & 0.014\% & 1.255\% \\
  Qwen3-Omni-Inst   & 0.013\% & 0.370\% & \textbf{0.007\%} & 1.778\% \\
  \midrule

  Ours (w/o IA-SFT) & 0.005\% & 0.198\% & 0.014\% & 0.356\% \\
  Ours              & \textbf{0.003\%} & 0.122\% & \textbf{0.007\%} & \textbf{0.261\%} \\
  \bottomrule
  \end{tabular}%
  }
  \end{table}

A core contribution of our method is its principled approach to hallucination mitigation. To validate this, we use the same models and benchmarks from Table~\ref{tab:main_benchmark} and report the average hallucination rate per scenario, defined as the ratio of hallucinated samples to total samples across all benchmarks within each scenario. Specifically, a sample is classified as hallucinated if its transcription both exceeds the ground-truth length by more than 50\% and is entirely unrelated to it.
As shown in Table~\ref{tab:hallucination_rate}, our model achieves substantially lower hallucination rates than all baselines. %, including Fun-ASR, which explicitly claims optimization for hallucination. 
This suggests that our design mitigates hallucination at a fundamental level, owing to the explicit constraint during training that enforces the encoder to remain acoustically grounded, thereby suppressing representation drift. We further ablate the IA-SFT module and identify it as a major contributor to our low hallucination rate. Without IA-SFT, even low-entropy representations are susceptible to being dominated by the LLM's overwhelming gradients, causing the model to fall back on linguistic shortcuts and produce hallucinations--a failure mode that IA-SFT can help prevent.

\subsection{Layer-wise Alignment: From Encoder to Adaptor}

\begin{figure}[!t]
    \centering
    \includegraphics[width=1.0\linewidth]{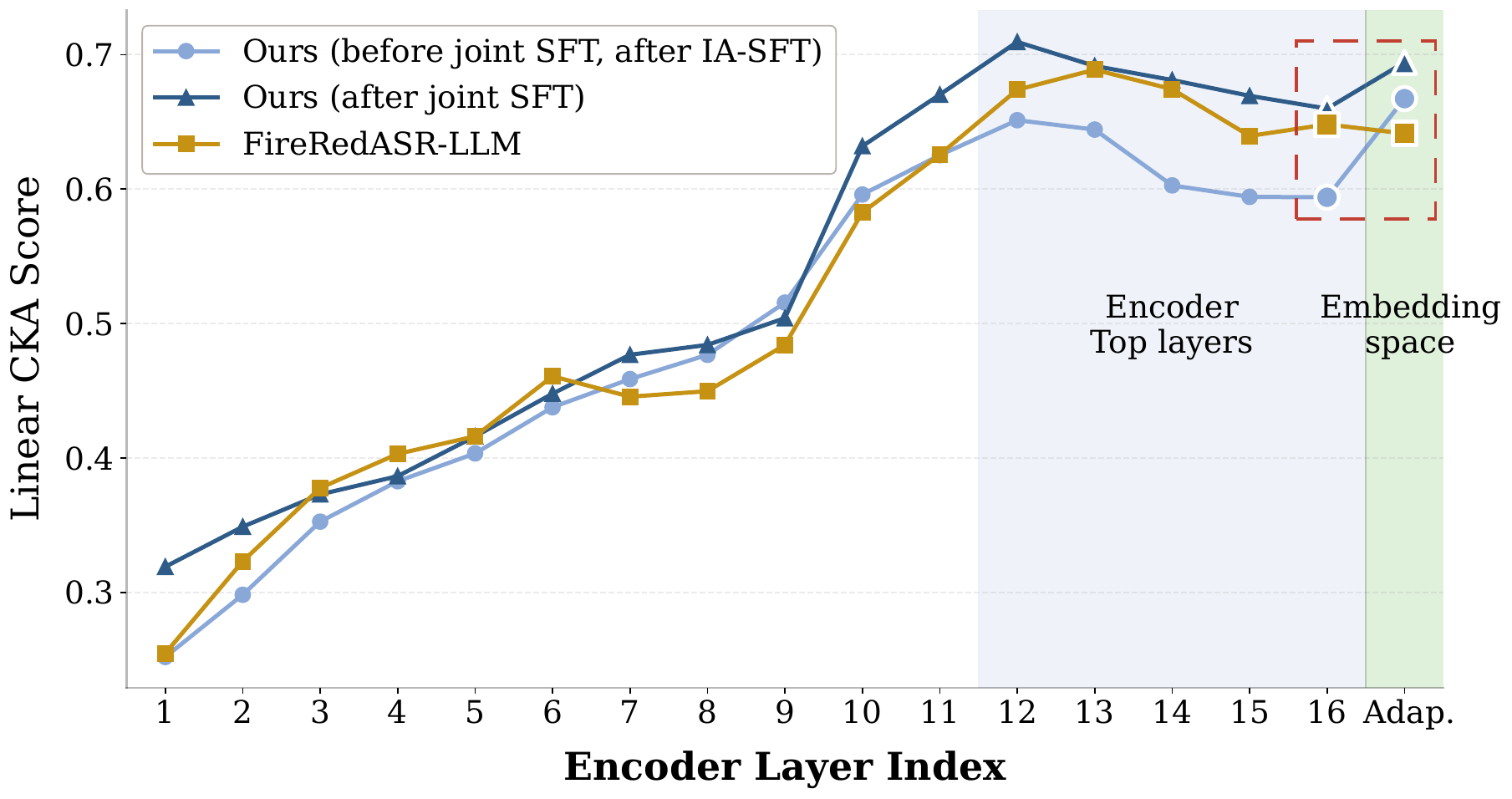}
    % \vskip -1mm
    % \caption{Layer-wise semantic alignment between encoder representations and corresponding ground-truth textual embeddings. ``Emb'' denotes the embeddings after mapping by the adaptor.} 
    \caption{
        % We randomly sample 1,000 utterances from AISHELL and LibriSpeech and compute the average Linear CKA scores between layer-wise representations and ground-truth text embeddings, where indices 1--16 correspond to encoder layers and ``Emb'' denotes the post-adaptor embedding..
        Linear CKA scores between layer-wise representations and ground-truth text embeddings, averaged over 1,000 utterances from AISHELL and LibriSpeech. Indices 1--16 denote encoder layers; ``Adap.'' denotes the post-adaptor embedding.
    }
    \label{fig:cka_layerwise}
    \vskip -4mm
\end{figure}

While the preceding analysis focuses on encoder representations, we further examine post-adaptor embeddings. Figure~\ref{fig:cka_layerwise} visualizes the progression of semantic alignment by computing linear CKA scores between ground-truth text embeddings and representations at each encoder layer as well as the post-adaptor embedding. Linear CKA is the special case of the general CKA defined in Eq.~\eqref{eq:cka}, measuring alignment across features of different dimensionalities via centered Gram matrices that capture inter-sample similarity structures. To handle sequence length mismatches, we apply temporal mean pooling to obtain utterance-level representations $X_l \in \mathbb{R}^{B\times d_e}$ or $Z \in \mathbb{R}^{B\times d_m}$ from the $l$-th layer ($B$ denotes the number of utterances), with the same pooling applied to ground-truth text embeddings $Y \in \mathbb{R}^{B\times d_m}$:
\vspace{-2pt}
\begin{equation}
\mathrm{CKA}_{\text{linear}}(X_l, Y)
=
\frac{\|X_l^\top Y\|_F^2}
     {\|X_l^\top X_l\|_F \, \|Y^\top Y\|_F},
\label{eq:cka_linear}
\end{equation}
\vspace{-2pt}
where $\|\cdot\|_F$ denotes the Frobenius norm. In addition to our model's trajectories before and after end-to-end SFT, we include FireRedASR-LLM as a reference, since our encoder and adaptor share identical architectures, ensuring a fair comparison. As shown in Figure~\ref{fig:cka_layerwise}, our model and FireRedASR-LLM exhibit markedly different behaviors at the adaptor interface. For FireRedASR-LLM, the CKA score decreases after adaptor projection, indicating that the high-entropy encoder representations are difficult to map cleanly into the text embedding space---the adaptor has to simultaneously perform dimensional projection and compensate for an unstable interface and reconcile misaligned semantic geometry, a dual burden that exceeds its limited capacity. In contrast, our model consistently shows a pronounced CKA increase at this interface, regardless of whether joint SFT has been applied. Since our encoder already produces low-entropy representations roughly synchronized with text tokens, and IA-SFT further familiarizes the LLM with these representations, the adaptor is largely relieved of compensatory duties. It needs only to perform a straightforward dimensional mapping while correcting residual geometric misalignment, yielding embeddings structurally better matched to the text manifold.

\subsection{Ablation Study}

\begin{table}[t]
    \centering
    \caption{Ablation study on post-training strategies. ``-- Encoder iter.\ in IA-SFT'' denotes removing encoder hot-swapping and asynchronous parallel during IA-SFT, reducing it to a standard encoder-frozen stage with only the adaptor and LLM trained.}
    \label{tab:ablation_strategy}
    
    \resizebox{\linewidth}{!}{
    \begin{tabular}{l c c c c}
        \toprule
        % \textbf{Configuration} & \multicolumn{4}{c}{\textbf{Avg.\ CER\% / WER\%}} \\
        % \cmidrule(lr){2-5}
        \textbf{Configuration} & \textbf{Mandarin} & \textbf{Dialect} & \textbf{English} & \textbf{Code-switch} \\
        \midrule
        Our full pipeline & \textbf{1.93} & \textbf{10.42} & \textbf{3.35} & \textbf{8.39} \\
        \quad -- Joint SFT & 2.18 & 12.84 & 4.22 & 10.15 \\
        \quad -- IA-SFT & 2.08 & 11.47 & 3.79 & 9.11 \\
        \quad -- Encoder iter.\ in IA-SFT & 1.95 & 10.87 & 3.40 & 8.57 \\
        \bottomrule
    \end{tabular}
    }
    \vskip -5mm
\end{table}

We conduct ablation studies on post-training strategies under controlled conditions, using identical training data and configurations across all experiments, each trained until validation loss plateaus for three consecutive checkpoints (10k-step intervals). As shown in Table~\ref{tab:ablation_strategy}, ablating joint SFT yields the largest degradation, confirming that end-to-end optimization is essential for refining the representation space. Moreover, ablating IA-SFT also causes substantial performance drops, consistent with the analysis in Section~\ref{Metric_Empirical} and Table~\ref{tab:hallucination_rate}: without the drift constraint imposed by IA-SFT, speech representations tend to shift toward the semantic subspace, undermining phoneme discrimination and amplifying hallucination. We further ablate the encoder hot-swapping mechanism within IA-SFT, reducing it to a standard encoder-frozen stage where only the adaptor and LLM are trained using the final encoder checkpoint. This variant still underperforms our full IA-SFT, since static representations from a frozen encoder offer limited diversity needed for robust adaptation, which may make the LLM more prone to suboptimal patterns within a narrow phonetic subspace. In contrast, encoder hot-swapping mitigates this by acting as implicit regularization: exposing the LLM to progressively evolving encoder states encourages learning of robust patterns shared across encoders, rather than those specific to any single encoder.

\section{Conclusion and Future Work}
\label{sec6}

In this work, we revisit LLM-based ASR from an entropy-allocation perspective and propose a capability-boundary-aware framework that explicitly decouples the encoder and LLM to resolve acoustic uncertainty and semantic ambiguity, respectively. By implementing a phoneme-prioritized encoder pretraining and an IA-SFT paradigm, we improve the entropy reduction dynamics across modules. Experiments on Mandarin and English benchmarks show that our approach can achieve competitive performance with lower hallucination rates using only 2.3B parameters, highlighting the effectiveness of our design. Future work will extend this analysis to large-scale LALMs and investigate how reinforcement learning further reshapes entropy allocation.

\clearpage
\nocite{langley00}

\bibliography{example_paper}
\bibliographystyle{icml2026}

%%%%%%%%%%%%%%%%%%%%%%%%%%%%%%%%%%%%%%%%%%%%%%%%%%%%%%%%%%%%%%%%%%%%%%%%%%%%%%%
%%%%%%%%%%%%%%%%%%%%%%%%%%%%%%%%%%%%%%%%%%%%%%%%%%%%%%%%%%%%%%%%%%%%%%%%%%%%%%%
% APPENDIX
%%%%%%%%%%%%%%%%%%%%%%%%%%%%%%%%%%%%%%%%%%%%%%%%%%%%%%%%%%%%%%%%%%%%%%%%%%%%%%%
%%%%%%%%%%%%%%%%%%%%%%%%%%%%%%%%%%%%%%%%%%%%%%%%%%%%%%%%%%%%%%%%%%%%%%%%%%%%%%%
\newpage
\appendix
\onecolumn

\section{Training Details}
\label{append1}

\subsection{Training Data Statistics}
\label{appendA1}

Across the pretraining, alignment, and SFT stages, we use the same speech corpora, with only the training steps and objectives varying at each stage. Table~\ref{tab:training_data} summarizes detailed statistics of the training data, including language coverage, domain diversity, and overall scale. Our training corpora consist of annotated speech-text pairs totaling approximately 560K hours.

% Most of the data are drawn from publicly available datasets, with a small fraction from in-house collections. This allows the research community to validate our training recipe using open-source resources.

\begin{table*}[h]
\centering
\caption{An overview of the Mandarin and English speech corpora used across all training stages.}
\label{tab:training_data}
{

\begin{tabular}{l|c|c|c}
\toprule
\textbf{Dataset} & \textbf{Language(s)} & \textbf{Domain} & \textbf{Hours} \\
\midrule

YODAS-Granary~\cite{koluguri2025granary}   & English             & Variety        & 120K \\
Emilia~\cite{he2024emilia}               & English / Mandarin  & Variety   & 189K \\
MLS~\cite{pratap2020mls}                  & English             & Audiobook      & 45K \\
VoxPopuli~\cite{wang2021voxpopuli}            & English             & Parliament      & 550 \\
MSR-86K~\cite{li2024msr}              & English             & YouTube           &  10K \\
Common-Voice-v15~\cite{ardila2020common}     & English / Mandarin  & Read           & 3K \\
GigaSpeech~\cite{chen2021gigaspeech}           & English             & Variety        & 10K \\
LibriHeavy~\cite{kang2024libriheavy}           & English             & Audiobook      & 50K \\
LibriSpeech~\cite{panayotov2015librispeech}          & English             & Audiobook      & 960 \\
SPGISpeech~\cite{o2021spgispeech}           & English             & Finance        & 5K \\
PeopleSpeech~\cite{galvez2021people}         & English   & Variety        & 30K \\
VCTK~\cite{yamagishi2019cstr}                 & English             & Read           & 25 \\
TEDLIUM3~\cite{hernandez2018ted}          & English             & Talk           & 500 \\
WenetSpeech-Yue~\cite{li2026wenetspeech}          & Chinese dialects            & Variety        & 22K \\
WenetSpeech-Chuan~\cite{dai2025wenetspeech}          & Chinese dialects             & Variety        & 10K \\
WenetSpeech~\cite{zhang2022wenetspeech}          & Mandarin            & Variety        & 11K \\
FLEURS~\cite{conneau2023fleurs}               & English / Mandarin  & News           & 100 \\
AISHELL-1~\cite{bu2017aishell}            & Mandarin            & Read           & 150 \\
AISHELL-2~\cite{du2018aishell}            & Mandarin            & Read           & 1K \\
KeSpeech~\cite{tang2021kespeech}             & Mandarin and 8 Subdialects            & Conversation   & 1.6K \\
CS-Dialogue~\cite{zhou2025cs}          & Mandarin English Code Switch            & Variety        & 104 \\
ASCEND~\cite{lovenia2022ascend}          & Mandarin English Code Switch             & Conversation        & 10 \\
In-house data        &  English / Mandarin  &  Conversation  & $\sim\!$ 50K \\
\midrule
Total               & English / Mandarin  & All           & $\sim\!$ 560K \\

\bottomrule
\end{tabular}
}
\end{table*}

\subsection{Our LLM-based ASR Architecture}
\label{appendA2}

% \begin{itemize}
% \item \textbf{Feature Extraction.} 
\textbf{Feature extraction.}
We extract 80-dimensional log-Mel spectrograms using a 25ms window and a 10ms frame shift, followed by global mean and variance normalization.

% \item \textbf{Speech encoder.}
\textbf{Speech encoder.}
The backbone of our encoder is inherited from FireRedASR-AED~\cite{xu2025fireredasr}, consisting of a 4x downsampling convolutional module followed by a stack of Conformer blocks~\cite{gulati2020conformer}, with a total of approximately 600 M parameters. The encoder converts speech into continuous representations at a frame rate of 25~Hz (40~ms temporal resolution). 

\textbf{CTC head.}
For encoder pretraining, we attach a three-layer MLP as a CTC head after the speech encoder, which maps its hidden representations to the target vocabulary and is optimized with the connectionist temporal classification loss~\cite{graves2006connectionist,yao2024cr}. The CTC head is used exclusively during pretraining.

%\item \textbf{Speech adaptor.}
\textbf{Speech adaptor.}
A lightweight speech adaptor, composed of an MLP with two linear layers, is responsible for mapping the encoder’s speech representations into the text embedding space of the LLM. Prior to the projection, a 4× downsampling is applied by concatenating 4 consecutive frames along the feature dimension to reduce the sequence length. Following this downsampling process, the frame rate is reduced to 6.25~Hz, corresponding to a temporal resolution of 160~ms per token.

%\item \textbf{LLM decoder.}
\textbf{LLM decoder.}
The decoder is initialized from Qwen3-1.7B~\cite{yang2025qwen3} and generates the final transcription conditioned on both text prompts (``Transcribe the speech into text.'') and speech tokens. 

%\end{itemize}

\subsection{Training Setups}
\label{appendA3}

In this study, we adopt stage-specific training configurations described as follows:

\textbf{Pretraining:} A dynamic batch strategy based on speech frames is employed, with a maximum of 20k frames per batch to efficiently handle variable-length utterances. The learning rate follows a schedule with a linear warm-up to 5.0e-4 over the first 8k steps, followed by exponential decay for the remainder of training.

\textbf{Alignment:} The batch size is set to 10k frames. The learning rate follows the same schedule as in pretraining, with a linear warm-up to a maximum of 1.0e-3.

\textbf{IA-SFT and joint SFT:} The batch size is further reduced to 7k frames. The learning rate also follows the same warm-up and exponential decay schedule, with a maximum learning rate of 1.0e-5. In pilot experiments, we also explored lower learning rates during joint SFT to reduce representation drift. Empirically, after IA-SFT has sufficiently narrowed the modality gap, a learning rate of 1.0e-5 maintained the interface stability established in prior stages and performed better than the more conservative alternatives.

The Adam optimizer~\cite{kingma2014adam} is used across all training stages. Experiments are conducted on NVIDIA A100 80GB GPUs using DeepSpeed ZeRO Stage-2~\cite{rajbhandari2020zero}, with gradient accumulation over 4 steps, bfloat16 precision, and FlashAttention-2~\cite{dao2023flashattention}.

\subsection{Training Details of IA-SFT}
\label{appendA4}

%We propose an IA-SFT strategy to maintain functional decoupling between the encoder and the LLM during post-training. 
Here, we provide additional details on the implementation of IA-SFT. %Figure~\ref{fig_sft} illustrates the IA-SFT workflow and compares it with standard SFT, while Table~\ref{tab:training_schedule} presents the training details of the IA-SFT, including the encoder updating schedule, the CKA threshold settings, and the specific training steps for each stage.

\paragraph{CKA-guided encoder update schedule.}
During IA-SFT, we perform several rounds of encoder hot-swapping to ensure diversity in the representations exposed to the LLM. Meanwhile, we aim to limit representation drift, as measured by CKA, so that the adaptor--LLM interface does not need to spend excessive effort adapting to each newly swapped encoder. To this end, we monitor changes in the representation distribution on a fixed validation set using CKA scores, as described in Section~\ref{sec3.4}.

Given the current encoder checkpoint $\mathcal{E}_{\mathrm{cur}}$ and a reference checkpoint $\mathcal{E}_{\mathrm{ref}}$, we compute the CKA score between their representations and trigger an update once the score falls below a predefined threshold $\tau$:
\begin{equation}
\mathrm{CKA}(\mathcal{E}_{\mathrm{cur}}, \mathcal{E}_{\mathrm{ref}}) < \tau ,
\end{equation}

Given two sets of encoder representations $E^{(a)}, E^{(b)} \in \mathbb{R}^{L \times d_e}$ extracted from the same evaluation set, CKA is defined as
\begin{equation}
\text{CKA}(E^{(a)}, E^{(b)}) = \frac{ \langle \tilde{K}^{(a)}, \tilde{K}^{(b)} \rangle_F }{ \sqrt{ \langle \tilde{K}^{(a)}, \tilde{K}^{(a)} \rangle_F \cdot \langle \tilde{K}^{(b)}, \tilde{K}^{(b)} \rangle_F } },
\label{eq:cka}
\end{equation}
where $\tilde{K}^{(a)}$ and $\tilde{K}^{(b)}$ are centered Gram matrices calculated as $\tilde{K}^{(x)} = CE^{(x)} E^{(x)\top}C$. The centering matrix is defined as $C = I_L - \frac{1}{L}J_L$, where $I_L$ is the identity matrix and $J_L$ is the all-ones matrix. CKA measures the geometric similarity of representation spaces, invariant to orthogonal transformations and isotropic scaling.

\paragraph{Iterative schedule and hot-swapping.} 
Whenever an update is triggered, the current pretraining encoder $\mathcal{E}_{\mathrm{cur}}$ is used to simultaneously update both (i) the frozen encoder $\mathcal{E}^{\mathrm{SFT}}$ in the IA-SFT pipeline, and (ii) the reference checkpoint $\mathcal{E}_{\mathrm{ref}}$ used for subsequent CKA monitoring during pretraining:
\begin{equation}
\mathcal{E}^{\mathrm{SFT}} \leftarrow \mathcal{E}_{\mathrm{cur}}, \qquad
\mathcal{E}_{\mathrm{ref}} \leftarrow \mathcal{E}_{\mathrm{cur}}.
\end{equation}
% and continue SFT without re-running alignment.
% This yields an iterative sequence
% \begin{equation}
% \mathcal{E}_{\text{init}} \rightarrow \mathcal{E}_1 \rightarrow \mathcal{E}_2 \rightarrow \cdots \rightarrow \mathcal{E}_{\text{final}},
% \end{equation}
% while the adaptor and LLM are continuously optimized.
We begin monitoring CKA scores every 10k steps once pretraining reaches 500k steps, at which point the model starts to exhibit convergence trends. The encoder at 500k steps serves as the initial reference checkpoint $\mathcal{E}_{\mathrm{ref}}$. Since pretraining focuses solely on the ASR task, the optimization direction remains largely consistent, causing CKA scores to generally decrease, with occasional rebounds throughout encoder evolution---this partly explains why direct encoder hot-swapping during IA-SFT works effectively without requiring realignment. Based on our experience, we set the CKA threshold $\tau=0.975$, which we find to be a moderate choice: a higher threshold triggers more frequent updates, incurring unnecessary overhead; a lower threshold leads to less frequent updates, permitting greater representation drift between swaps, which may reduce the LLM's ability to capture robust patterns shared across encoder states.

When pretraining reaches 1.01M steps, the CKA score first drops below this threshold (see Figure~\ref{fig_cka1}). At this point, the corresponding encoder is used to initialize the encoder in our encoder--adaptor-LLM model, and the alignment stage begins. After 1.3M alignment steps, we proceed to the IA-SFT stage, where the adaptor and LLM are jointly optimized. From then on, IA-SFT and pretraining are executed asynchronously in parallel. When pretraining reaches 1.32M steps, the CKA score again drops below 0.975 (see Figure~\ref{fig_cka2}). We then directly update both $\mathcal{E}^{\mathrm{SFT}}$ and $\mathcal{E}_{\mathrm{ref}}$, and continue asynchronous training. Finally, when pretraining reaches the maximum step of 2M (see Figure~\ref{fig_cka3}), we perform the last encoder hot-swapping for IA-SFT. As summarized in Table~\ref{tab:training_schedule}, the SFT-stage encoder is initialized once and updated twice throughout the process.

\begin{figure}[t]
    \centering
    \begin{subfigure}[t]{0.48\linewidth}
        \centering
        \includegraphics[width=\linewidth]{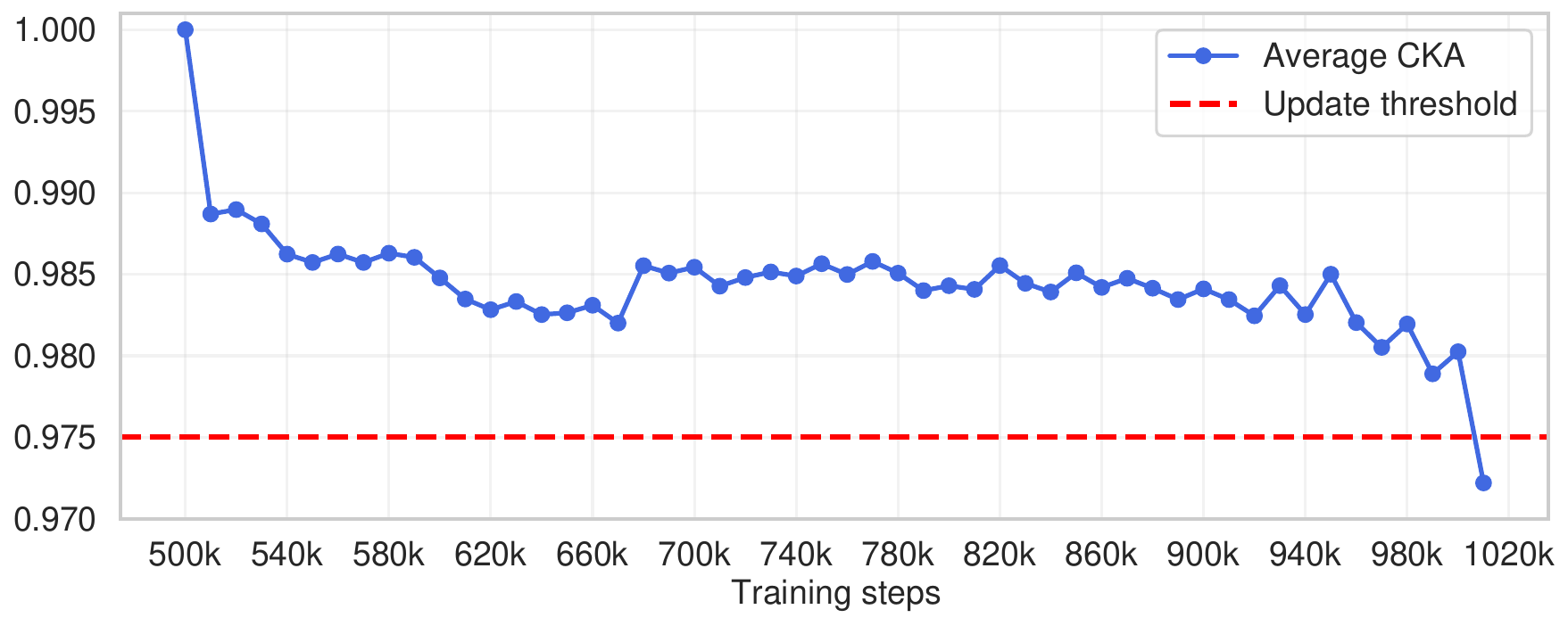}
        \caption{CKA scores between encoders (0.50~M-1.01~M) and reference checkpoint (0.50~M)}
        \label{fig_cka1}
    \end{subfigure}
    \hfill
    \begin{subfigure}[t]{0.48\linewidth}
        \centering
        \includegraphics[width=\linewidth]{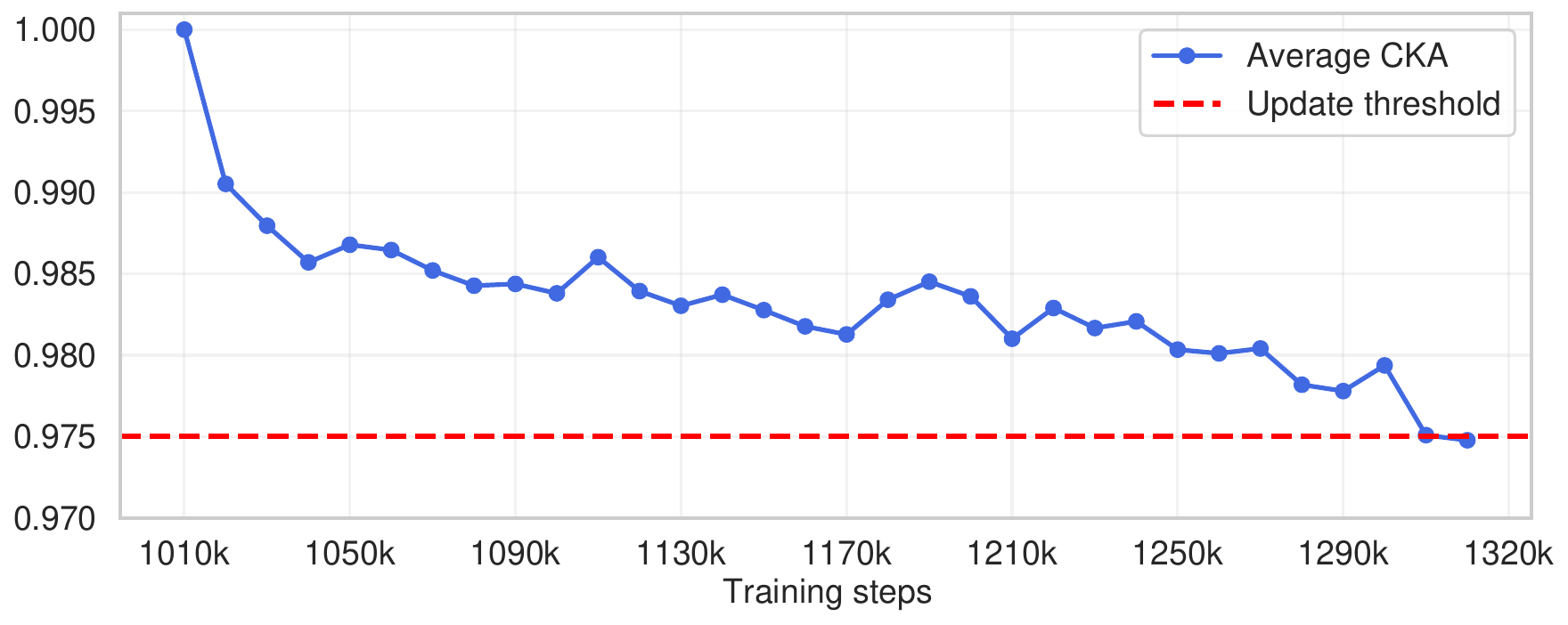}
        \caption{CKA scores between encoders (1.01~M-1.32~M) and reference checkpoint (1.01~M)}
        \label{fig_cka2}
    \end{subfigure}
    \hfill
    \begin{subfigure}[b]{0.98\linewidth}
        \centering
        \includegraphics[width=\linewidth]{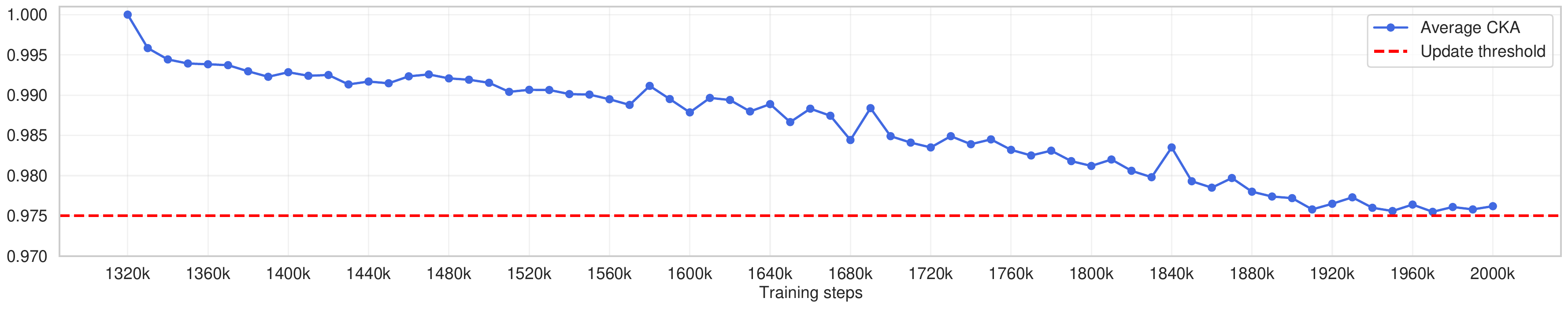}
        \caption{CKA scores between encoders (1.32~M-2.00~M) and reference checkpoint (1.32~M)}
        \label{fig_cka3}
    \end{subfigure}
    \caption{Trajectory of CKA scores during pretraining. It reports the average CKA between the encoders and the corresponding reference checkpoint.}
    \label{fig_cka}
    \vskip -2mm
\end{figure}

  \begin{table*}[htbp]
      \centering
      \caption{Detailed training procedures, including the encoder update schedule, CKA thresholds, and stage-wise training steps. Here, the step counts for pretraining, alignment, and IA-SFT are measured independently in their respective training processes.}
      \label{tab:training_schedule}
      \renewcommand{\arraystretch}{1.05}
      \setlength{\tabcolsep}{6pt}
      \begin{tabularx}{\linewidth}{>{\raggedright\arraybackslash}p{0.18\linewidth}
                                  >{\centering\arraybackslash}p{0.14\linewidth}
                                  >{\raggedright\arraybackslash}X}
          \toprule
          \textbf{Pretrain Step} & \textbf{Trigger} & \textbf{Action Details} \\
          \midrule
          0--0.5M & -- &
          \textbf{1. Pretraining begins}; with the reference checkpoint initialized at step 0.5~M. \\

          0.5M--1.01M & $\mathrm{CKA} < 0.975$ &
          \textbf{2. Pretraining continues}; at step 1.01~M, the encoder snapshot updates the reference checkpoint and initializes the encoder for post-training. The alignment stage then begins (total 1.3~M steps). After alignment ends, IA-SFT then begins (total 1.0~M steps) while pretraining continues asynchronously.\\

          1.01M--1.32M & $\mathrm{CKA} < 0.975$ &
          \textbf{3. Pretraining continues}; at step 1.32~M, the encoder snapshot updates the reference checkpoint and the encoder for IA-SFT (total 1.0~M steps).\\

          1.32M--2.00M (end) & -- &
          \textbf{4. Pretraining ends}; the encoder snapshot updates the encoder for IA-SFT (2.00M steps). \\

          -- & -- &
          \textbf{5. IA-SFT ends}; followed by joint-SFT (2.00M steps). \\
          \bottomrule
      \end{tabularx}
  \end{table*}

\paragraph{Why Encoder Updates without Realignment Work?}
First, in our design, pretraining and SFT target the same supervised ASR task, differing only in the supervision signals and loss functions, which helps maintain consistency in the overall optimization direction. Moreover, as shown in Figure~\ref{fig_cka}, CKA scores between encoders remain consistently high throughout pretraining, suggesting substantial similarity in the dominant encoder subspace across checkpoints. In this regime, subsequent pretraining appears to mainly reduce acoustic redundancy and refine acoustic preferences, rather than reshape the global manifold. Whenever the updated encoder is applied to the SFT model, downstream components receive representations with lower entropy while remaining broadly distributionally consistent. The adaptor and LLM can therefore continue adapting to these inputs with limited additional alignment cost, which may help explain why explicit realignment was not necessary in our experiments.

In general, compared to the traditional pipeline, IA-SFT provides the following main advantages:

\begin{itemize}
    \item \textbf{Maintaining functional decoupling.}
    In contrast to sequential SFT with trainable encoders, IA-SFT helps limit the tendency of the encoder to drift toward the LLM's semantic manifold. The encoder can therefore continue optimizing primarily for acoustic uncertainty reduction, which is more consistent with preserving fine-grained acoustic discrimination and limiting semantic dependence.

    \item \textbf{Regularization from multiple encoder perspectives.}
    Beyond the reduction in entropy, the iteratively updated encoders exhibit distinct preference patterns for acoustic cues. IA-SFT exposes the LLM to representations with different signal-to-noise levels in a curriculum-like manner. This process can also act as a form of regularization, encouraging the LLM to focus on acoustic features that remain consistent across encoder states rather than overfitting to idiosyncratic biases of any single checkpoint.

    \item \textbf{Parallelization and efficiency.}
    Converting sequential pretraining and SFT into asynchronous parallel processes substantially reduces the total training time.

    \item \textbf{Flexibility and transferability.} Benefiting from functional decoupling, our encoder remains compatible with both the CTC head and the LLM decoder in our setup. This may improve deployment flexibility by enabling scalable configurations tailored to different computational constraints based on a shared encoder. It also creates a cleaner interface for iterating on individual components, which may reduce the cost of integrating newer LLM backbones in future system updates.

\end{itemize}

\section{Formalization of Encoder Representation Dynamics}
\label{sec:appendix_dynamics}

This section provides a geometric perspective on encoder representation dynamics in LLM-based ASR systems with an encoder--adaptor--LLM architecture. 
Our goal is to characterize how different pretraining and post-training paradigms shape the spectral structure and functional composition of encoder representations, thereby inducing distinct operating regimes at the encoder--LLM interface.

\subsection{Problem Setup and Spectral Decomposition}

Let $X$ be a random variable representing input speech, and let $\mathcal{E}_\phi$ denote a speech encoder. 
For a realization $x$, the encoder produces
\begin{equation}
E = \mathcal{E}_\phi(x) = [e_1,\ldots,e_L]^\top \in \mathbb{R}^{L \times d_e},
\end{equation}
where $L$ is the sequence length and $d_e$ is the hidden dimension. 
Consider the singular value decomposition:
\begin{equation}
E = U \Sigma V^\top,
\end{equation}
where $\Sigma=\mathrm{diag}(\sigma_1,\ldots,\sigma_d)$ with $d=\min(L,d_e)$.

The singular values $\{\sigma_i\}$ characterize how variance is distributed across principal directions. 
By normalizing the singular values and treating them as a discrete spectrum, we can further quantify their concentration using an entropy measure, leading to the normalized spectral entropy (NSE).
Moreover, a concentrated spectrum corresponds to a lower-entropy and more compact representation, whereas a flatter spectrum indicates higher residual uncertainty distributed across multiple directions. 
Since $E$ serves as the interface to the adaptor--LLM stack, its spectral structure reflects how much uncertainty has already been reduced by the encoder.

\subsection{Functional Subspaces of Speech Representations}

We view the encoder feature space as composed of multiple overlapping functional subspaces:
\begin{equation}
\mathbb{R}^{d_e}
\;\supset\;
\mathcal{S}_{\text{linguistic}}
\;\oplus\;
\mathcal{S}_{\text{paralinguistic}}
\;\oplus\;
\mathcal{S}_{\text{non-linguistic}},
\end{equation}
where
\begin{itemize}
    \item $\mathcal{S}_{\text{linguistic}}$ captures transcription-relevant structure (e.g., phonetic and lexical information);
    \item $\mathcal{S}_{\text{paralinguistic}}$ includes speaker prosody and emotion;
    \item $\mathcal{S}_{\text{non-linguistic}}$ corresponds to environmental noise and nuisance variability.
\end{itemize}

These subspaces are not strictly orthogonal, but this decomposition provides a useful abstraction for analyzing how training redistributes variance. 
An effective ASR encoder should concentrate variance within $\mathcal{S}_{\text{linguistic}}$ while suppressing irrelevant variability, thereby forming a compact and acoustically grounded interface.

\subsection{Accessible-Information Proxies under a Gaussian Approximation}

Beyond spectral concentration, we also wish to characterize how much transcription-relevant information remains accessible from the encoder representation. 
Let $u$ denote an utterance-level summary of the encoder representation, $P$ a phonetic target variable, and $C$ a semantic target variable. 
To quantify their statistical dependence in a tractable form, we adopt a joint Gaussian approximation on $(u,P,C)$ and use mutual-information-inspired log-determinant quantities as accessible-information proxies.

For a Gaussian random vector $Z \in \mathbb{R}^k$ with covariance $\Sigma_Z$, its differential entropy is given by
\begin{equation}
h(Z)=\frac{1}{2}\log\!\big((2\pi e)^k \det \Sigma_Z\big).
\end{equation}
This shows that, under a Gaussian assumption, entropy is fully determined by the covariance structure through its log-determinant.

\begin{proposition}[Gaussian mutual information in log-det form]
Let $A \in \mathbb{R}^{d_A}$ and $B \in \mathbb{R}^{d_B}$ be jointly Gaussian random variables with joint covariance
\begin{equation}
\Sigma_{[A,B]}=
\begin{bmatrix}
\Sigma_{AA} & \Sigma_{AB} \\
\Sigma_{BA} & \Sigma_{BB}
\end{bmatrix}.
\end{equation}
Then their mutual information admits the closed-form expression
\begin{equation}
I(A;B)
=
\frac{1}{2}\log
\frac{\det \Sigma_{AA}\,\det \Sigma_{BB}}
     {\det \Sigma_{[A,B]}}.
\end{equation}
\end{proposition}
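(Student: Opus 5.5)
The plan is to write mutual information as a combination of differential entropies and then apply the Gaussian entropy formula stated just before the proposition. First I will assume the joint covariance $\Sigma_{[A,B]}$ is positive definite. This is the only case where both sides are finite, and it is exactly the situation produced by the ridge regularization $\tilde{\Sigma}=\hat{\Sigma}+\lambda I$ in Section~\ref{sec3.2}. Under this assumption, $A$, $B$ and $[A,B]$ all have densities, and the principal submatrices $\Sigma_{AA}$ and $\Sigma_{BB}$ are also positive definite.

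Next I will use the identity
\begin{equation*}
I(A;B)=h(A)+h(B)-h(A,B).
\end{equation*}
It follows from the definition $I(A;B)=\mathbb{E}\big[\log \tfrac{p_{AB}(A,B)}{p_A(A)p_B(B)}\big]$ by splitting the logarithm. The marginals of a jointly Gaussian vector are Gaussian, with covariances given by the corresponding principal blocks. So the entropy formula applies to each term:
\begin{itemize}
\item $h(A)=\tfrac12\log\big((2\pi e)^{d_A}\det\Sigma_{AA}\big)$;
\item $h(B)=\tfrac12\log\big((2\pi e)^{d_B}\det\Sigma_{BB}\big)$;
\item $h(A,B)=\tfrac12\log\big((2\pi e)^{d_A+d_B}\det\Sigma_{[A,B]}\big)$.
\end{itemize}

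Substituting these, the $(2\pi e)$ factors cancel because the exponents satisfy $d_A+d_B-(d_A+d_B)=0$. What remains is $\tfrac12\log\big(\det\Sigma_{AA}\det\Sigma_{BB}/\det\Sigma_{[A,B]}\big)$, which is the claimed expression.

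As a consistency check, I would also note the Schur complement identity $\det\Sigma_{[A,B]}=\det\Sigma_{AA}\det(\Sigma_{BB}-\Sigma_{BA}\Sigma_{AA}^{-1}\Sigma_{AB})$. It gives the equivalent form $I(A;B)=\tfrac12\log\big(\det\Sigma_{BB}/\det\Sigma_{B\mid A}\big)$. Fischer's inequality then shows this quantity is nonnegative, which justifies why the $[\cdot]_+$ clipping in \eqref{eq:pai} is only needed for numerical error. The same route, applied to conditional covariances, yields the conditional version used for CSAI.

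There is no real obstacle here; the argument is a routine computation. The only delicate point is the degenerate case where $\Sigma_{[A,B]}$ is singular. There the densities fail to exist, and the formula must be read as $+\infty$ or handled on the support. I will dispose of this case by the positive-definiteness assumption stated at the outset.
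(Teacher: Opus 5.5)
Your proposal is correct and follows the paper's proof essentially verbatim: write $I(A;B)=h(A)+h(B)-h(A,B)$, substitute the Gaussian log-det entropies of the two marginals and of the joint vector, and cancel the $(2\pi e)$ constants. Your explicit positive-definiteness assumption, which the paper leaves implicit, and your Schur-complement/Fischer remark on nonnegativity are both sound; they add rigor but are not needed for the computation itself.
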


\begin{proof}
By definition,
\begin{equation}
I(A;B)=h(A)+h(B)-h(A,B).
\end{equation}
Since $(A,B)$ is jointly Gaussian, its marginals are also Gaussian. Therefore,
\begin{align}
h(A)
&=
\frac{1}{2}\log\!\big((2\pi e)^{d_A}\det \Sigma_{AA}\big),\\
h(B)
&=
\frac{1}{2}\log\!\big((2\pi e)^{d_B}\det \Sigma_{BB}\big),\\
h(A,B)
&=
\frac{1}{2}\log\!\big((2\pi e)^{d_A+d_B}\det \Sigma_{[A,B]}\big).
\end{align}
Substituting these expressions into the mutual information identity gives
\begin{align}
I(A;B)
&=
\frac{1}{2}\log\!\big((2\pi e)^{d_A}\det \Sigma_{AA}\big)
+
\frac{1}{2}\log\!\big((2\pi e)^{d_B}\det \Sigma_{BB}\big)\nonumber\\
&\quad-
\frac{1}{2}\log\!\big((2\pi e)^{d_A+d_B}\det \Sigma_{[A,B]}\big).
\end{align}
Using logarithm rules, the constant terms cancel, yielding
\begin{equation}
I(A;B)
=
\frac{1}{2}\log
\frac{\det \Sigma_{AA}\,\det \Sigma_{BB}}
     {\det \Sigma_{[A,B]}}.
\end{equation}
\end{proof}

Applying this result to $(u,P)$ yields a Gaussian proxy for phonetic accessible information:
\begin{equation}
\mathrm{PAI}(E')
\;\propto\;
I(u;P)
=
\frac{1}{2}\log
\frac{\det \Sigma_{uu}\,\det \Sigma_{PP}}
     {\det \Sigma_{[u,P]}}.
\end{equation}

\begin{proposition}[Gaussian conditional mutual information in log-det form]
Let $A \in \mathbb{R}^{d_A}$, $B \in \mathbb{R}^{d_B}$, and $C \in \mathbb{R}^{d_C}$ be jointly Gaussian. Then
\begin{equation}
I(A;B\mid C)
=
\frac{1}{2}\log
\frac{\det \Sigma_{AA\mid C}\,\det \Sigma_{BB\mid C}}
     {\det \Sigma_{[A,B]\mid C}},
\end{equation}
where the conditional covariance matrices are defined via the Schur complement, e.g.
\begin{equation}
\Sigma_{AA\mid C}
=
\Sigma_{AA}-\Sigma_{AC}\Sigma_{CC}^{-1}\Sigma_{CA},
\end{equation}
and analogously for $\Sigma_{BB\mid C}$ and $\Sigma_{[A,B]\mid C}$. 
Here, $\Sigma_{AA\mid C}$ captures the residual variability of $A$ after removing the components that can be linearly explained by $C$.
\end{proposition}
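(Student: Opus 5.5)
The approach mirrors the proof of the unconditional log-det proposition, with every marginal entropy replaced by a conditional one. By the chain rule for differential entropy,
\begin{equation*}
I(A;B\mid C)=h(A\mid C)+h(B\mid C)-h(A,B\mid C),
\end{equation*}
so it suffices to compute each conditional entropy for jointly Gaussian vectors. We assume, as is implicit in the statement, that $\Sigma_{CC}$ is invertible and that the joint covariance of $(A,B,C)$ is nonsingular. In the regularized setting of Eq.~\eqref{eq:csai} this is automatic, because $\tilde{\Sigma}=\hat{\Sigma}+\lambda I$ is positive definite.

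The key step is the standard Gaussian conditioning lemma. Take any sub-vector $W\in\{A,B,[A,B]\}$ and set $\Sigma_{WW\mid C}=\Sigma_{WW}-\Sigma_{WC}\Sigma_{CC}^{-1}\Sigma_{CW}$. Then, conditionally on $C=c$, $W$ is Gaussian with mean $\mu_W+\Sigma_{WC}\Sigma_{CC}^{-1}(c-\mu_C)$ and covariance $\Sigma_{WW\mid C}$, and this covariance does not depend on $c$.

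To prove the lemma, define the residual $R=W-\Sigma_{WC}\Sigma_{CC}^{-1}C$. This is a linear function of a Gaussian vector. A direct computation gives $\mathrm{Cov}(R,C)=\Sigma_{WC}-\Sigma_{WC}\Sigma_{CC}^{-1}\Sigma_{CC}=0$. For jointly Gaussian vectors, zero cross-covariance implies independence, so $R$ is independent of $C$. A second direct computation gives $\mathrm{Cov}(R)=\Sigma_{WW\mid C}$. Since $W=R+\Sigma_{WC}\Sigma_{CC}^{-1}C$, conditioning on $C=c$ only shifts the mean.

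From the lemma, $h(W\mid C=c)=\tfrac12\log\big((2\pi e)^{d_W}\det\Sigma_{WW\mid C}\big)$ for every $c$. Averaging over $c$ leaves this unchanged, so $h(W\mid C)$ equals the same expression.

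For the joint term $W=[A,B]$, the Schur complement $\Sigma_{[A,B]\mid C}$ is exactly the block matrix whose diagonal blocks are $\Sigma_{AA\mid C}$ and $\Sigma_{BB\mid C}$. This consistency holds because the residual for $[A,B]$ is the stacked pair of residuals for $A$ and $B$. The same fact guarantees that the conditional covariances in the three entropy terms are mutually compatible.

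Substituting the three entropies into the chain-rule identity, the constants $(2\pi e)^{d_A}$, $(2\pi e)^{d_B}$ and $(2\pi e)^{d_A+d_B}$ cancel exactly as in the previous proposition. This leaves
\begin{equation*}
I(A;B\mid C)=\frac12\log\frac{\det\Sigma_{AA\mid C}\,\det\Sigma_{BB\mid C}}{\det\Sigma_{[A,B]\mid C}}.
\end{equation*}

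An alternative route is to apply the previous proposition to the residual pair $(R_A,R_B)$. This requires showing $I(A;B\mid C)=I(R_A;R_B)$, which follows from the independence of $(R_A,R_B)$ from $C$.

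The main obstacle is the conditioning lemma, specifically upgrading zero cross-covariance to full independence of $R$ and $C$, which is what makes the conditional covariance constant in $c$. To settle it, I would note that $(R,C)$ is a linear image of the Gaussian vector $(W,C)$, hence jointly Gaussian, and that its covariance is block diagonal. The joint density, or equivalently the characteristic function, then factorizes. Positive definiteness of the Schur complements follows from that of the joint covariance, via the block determinant formula $\det\Sigma_{[W,C]}=\det\Sigma_{CC}\det\Sigma_{WW\mid C}$, so all logarithms are well defined.
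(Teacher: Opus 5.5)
Your proposal is correct and follows essentially the same route as the paper: write $I(A;B\mid C)=h(A\mid C)+h(B\mid C)-h(A,B\mid C)$, evaluate each conditional entropy through the Schur-complement covariance, and cancel the $(2\pi e)$ factors. The only difference is that the paper simply asserts that Gaussian conditionals have covariance $\Sigma_{WW\mid C}$ independent of $c$, whereas you prove it by showing the residual $R=W-\Sigma_{WC}\Sigma_{CC}^{-1}C$ is independent of $C$, which supplies the step the paper leaves implicit.
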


\begin{proof}
By definition,
\begin{equation}
I(A;B\mid C)=h(A\mid C)+h(B\mid C)-h(A,B\mid C).
\end{equation}
For jointly Gaussian variables, conditional distributions remain Gaussian, and their conditional entropies are determined by conditional covariance matrices:
\begin{align}
h(A\mid C)
&=
\frac{1}{2}\log\!\big((2\pi e)^{d_A}\det \Sigma_{AA\mid C}\big),\\
h(B\mid C)
&=
\frac{1}{2}\log\!\big((2\pi e)^{d_B}\det \Sigma_{BB\mid C}\big),\\
h(A,B\mid C)
&=
\frac{1}{2}\log\!\big((2\pi e)^{d_A+d_B}\det \Sigma_{[A,B]\mid C}\big).
\end{align}
Substituting into the definition of conditional mutual information yields
\begin{align}
I(A;B\mid C)
&=
\frac{1}{2}\log\!\big((2\pi e)^{d_A}\det \Sigma_{AA\mid C}\big)
+
\frac{1}{2}\log\!\big((2\pi e)^{d_B}\det \Sigma_{BB\mid C}\big)\nonumber\\
&\quad-
\frac{1}{2}\log\!\big((2\pi e)^{d_A+d_B}\det \Sigma_{[A,B]\mid C}\big).
\end{align}
Again, the constant terms cancel, giving
\begin{equation}
I(A;B\mid C)
=
\frac{1}{2}\log
\frac{\det \Sigma_{AA\mid C}\,\det \Sigma_{BB\mid C}}
     {\det \Sigma_{[A,B]\mid C}}.
\end{equation}
\end{proof}

By setting $(A,B,C)=(u,C,P)$, we obtain the conditional semantic accessible information:
\begin{equation}
\mathrm{CSAI}(E')
\;\propto\;
I(u;C\mid P)
=
\frac{1}{2}\log
\frac{\det \Sigma_{uu\mid P}\,\det \Sigma_{CC\mid P}}
     {\det \Sigma_{[u,C]\mid P}}.
\end{equation}

In practice, we estimate covariance matrices empirically with ridge regularization and convert the logarithm to base 2, leading to the regularized forms in Eqs.~\eqref{eq:pai} and~\eqref{eq:csai}. 
These quantities should be interpreted as linear-Gaussian accessible-information proxies for relative comparison, rather than exact mutual information estimates.

\subsection{Encoder Pretraining and Spectral Bias}

Different pretraining objectives induce distinct spectral biases, shaping how acoustic uncertainty is distributed across the encoder--LLM interface.

\paragraph{Character-level CTC.}
Character-level CTC enforces monotonic alignment between speech frames and transcription labels:
\begin{equation}
\mathcal{L}_{\text{CTC-char}}
=
-\log \sum_{\pi \in \mathcal{B}^{-1}(y^{\text{char}})}
\prod_{i=1}^{L} p_{\theta}(\pi_i \mid e_i).
\end{equation}
where $y^{\text{char}}$ denotes the target character sequence, $\pi$ is a frame-level alignment path, and $\mathcal{B}$ is the collapse operator.
This objective encourages compact, transcription-aligned representations, but introduces early coupling to language-specific semantic units.

\paragraph{Phoneme-level CTC.}
Phoneme-level supervision mitigates premature semantic coupling:
\begin{equation}
\mathcal{L}_{\text{CTC-phoneme}}
=
-\log \sum_{\pi \in \mathcal{B}^{-1}(y^{\text{phoneme}})}
\prod_{i=1}^{L} p_{\theta}(\pi_i \mid e_i).
\end{equation}
where $y^{\text{phoneme}}$ denotes the phoneme sequence.
As phonemes are more directly tied to acoustic structure, this objective yields more language-agnostic and acoustically grounded representations, forming a more stable encoder--LLM interface.

\paragraph{AED-based pretraining.} The AED objective formulates speech modeling as sequence-to-sequence prediction: \begin{equation} \mathcal{L}_{\text{AED}} = -\sum_{n=1}^{N} \log p_{\theta}(y_n \mid y_{<n}, E) \end{equation} where the decoder attends to the encoder representation sequence. Compared with CTC, AED does not impose an explicit frame-level alignment constraint. Instead, it allows the encoder to retain broader contextual cues that can assist the autoregressive decoder. Moreover, the sequence-to-sequence formulation can be naturally extended beyond pure transcription to other speech-conditioned tasks. Consequently, although AED-pretrained encoders can support ASR, their representations tend to preserve a broader range of variability and exhibit less concentrated spectra than those optimized under strongly alignment-driven objectives. This broader representational capacity makes AED-pretrained encoders well suited for initializing large audio-language models (LALM), where the encoder is expected to support diverse audio--language tasks beyond transcription.

\paragraph{Hybrid supervised objectives.}
Some approaches combine alignment-driven and sequence-level objectives (e.g., CTC + AED):
\begin{equation}
\mathcal{L}_{\text{hybrid}} = \lambda \mathcal{L}_{\text{CTC}} + (1 - \lambda)\mathcal{L}_{\text{AED}}.
\end{equation}
Such hybrid training balances strong alignment constraints with sequence-level modeling, yielding intermediate spectral characteristics and uncertainty allocation.

\paragraph{Self-supervised pretraining (SSL).}
Self-supervised objectives such as Best-RQ~\cite{chiu2022self} learn acoustic representations by predicting discrete pseudo-targets derived from the input signal:
\begin{equation}
\mathcal{L}_{\text{SSL}}
=
-\sum_{t \in \mathcal{M}} \log p_{\theta}(z_t \mid \tilde{E}),
\end{equation}
where $z_t$ denotes discrete targets obtained via a quantization process, $\tilde{E}$ represents masked or corrupted acoustic features, and $\mathcal{M}$ denotes the set of masked positions, and the model predicts a categorical distribution over codebook entries at masked positions.
In this framework, continuous speech is first mapped to discrete codebook indices, which serve as prediction targets, and the model is trained to infer these targets at masked positions from surrounding context. 
Unlike supervised objectives, SSL does not enforce alignment to linguistic units, resulting in higher-entropy representations and deferring more uncertainty to downstream modules such as the LLM.

\subsection{Instruction-Based Post-training and Divergent Encoder Regimes}

After encoder pretraining, both LLM-ASR and LALM systems are typically further optimized under instruction-conditioned language modeling objectives. Let $E=\{e_i\}_{i=1}^{L}$ denote the encoder representations, $t$ the text instruction prompt, and $y=(y_1,\ldots,y_N)$ the target response sequence. A generic post-training objective can be written as
\begin{equation}
\mathcal{L}_{\text{inst}}
=
-\sum_{n=1}^{N}
\log p_{\theta}(y_n \mid y_{<n}, E, t).
\end{equation}
Once the encoder is connected to the adaptor--LLM stack under such supervision, the downstream language modeling objective further reshapes the representation geometry at the encoder--LLM interface. 
In LLM-ASR, the encoder is usually initialized from an ASR-oriented model trained with transcription objectives such as CTC or AED, so its representations are already biased toward transcription-relevant acoustic structure and may provide a relatively compact interface before LLM integration. 
Moreover, the subsequent instruction-based supervision is still centered on transcription, which further encourages the encoder--LLM interface to remain specialized for ASR. 
Importantly, when a well-formed interface has been established prior to full end-to-end coupling, subsequent joint optimization primarily induces local refinements on the existing representation manifold, rather than causing large distribution shifts toward text-dominated regimes. 
This allows the model to improve cross-module alignment while preserving the capability boundary between acoustic grounding and semantic modeling.

By contrast, LALMs are typically trained to support a broader range of audio--language tasks beyond transcription. 
Their encoders therefore tend to preserve a wider range of information and exhibit a flatter, higher-entropy representation space. 
Moreover, the diverse instructions further encourage representations to remain compatible with more general audio understanding. 
Consequently, although both LLM-ASR and LALMs may adopt the same encoder--adaptor--LLM architecture, differences in both encoder initialization and post-training supervision lead them to evolve under distinct representational regimes.

\section{Entropy Allocation and Functional Decoupling in LLM-ASR}
\label{sec:appendix_IT}

In this section, we provide an information-theoretic perspective on entropy allocation in LLM-based ASR. 
Our goal is to clarify how a well-formed encoder--LLM interface enables a capability-aligned division of uncertainty reduction between acoustic grounding and semantic disambiguation.

\subsection{Entropy Decomposition at the Encoder--LLM Interface}

Let $X$ and $Y$ denote the input speech and target transcription, respectively. 
The speech encoder and adaptor define a deterministic transformation chain
\begin{equation}
X \xrightarrow{\;\mathcal{E}_\phi\;} E \xrightarrow{\;\mathcal{A}_\psi\;} Z,
\end{equation}
where $E$ is the encoder representation and $Z$ is the projected speech embedding consumed by the LLM. 
At the interface level, the uncertainty of the target transcription can be decomposed as
\begin{equation}
H(Y) = I(Y;Z) + H(Y \mid Z),
\end{equation}
where $I(Y;Z)$ measures the amount of task-relevant information exposed through the interface, and $H(Y \mid Z)$ represents the residual uncertainty to be resolved by the LLM.

This decomposition naturally reflects a division of labor across modules. 
A more informative and structured interface increases $I(Y;Z)$ and reduces the burden on the LLM, whereas a weaker interface shifts more uncertainty to downstream language modeling. 
Importantly, the effectiveness of the interface depends not only on the quantity of retained information, but also on whether that information is aligned with the functional roles of each module.

\subsection{Capability Boundary and Functional Decoupling}

In LLM-based ASR, the encoder and the LLM exhibit complementary inductive biases. 
The encoder is well suited for resolving local acoustic ambiguity, such as phonetic distinctions and temporal structure, while the LLM is more effective at resolving higher-level ambiguity through linguistic priors and contextual reasoning. 
Accordingly, a well-formed encoder--LLM interface should satisfy two properties. 
First, it should provide a sufficiently compact representation to avoid unnecessary burden on the LLM. 
Second, the retained structure should remain acoustically grounded, preserving evidence derived from the speech signal rather than replacing it with text-correlated shortcuts.
Under this perspective, the encoder and LLM operate under functional decoupling: the encoder primarily reduces acoustic uncertainty, while the LLM resolves the remaining semantic ambiguity conditioned on the interface representation.

\subsection{Hallucination as Misallocated Uncertainty Reduction}

From this viewpoint, hallucination in LLM-based ASR can be interpreted as a consequence of misallocated uncertainty reduction across the encoder--LLM interface. 
Two representative failure modes are particularly relevant.

\paragraph{Semantic-contaminated encoder representations.}
One failure mode arises when joint optimization progressively aligns encoder representations with text-correlated regularities. 
In this regime, part of the uncertainty reduction is achieved through patterns that are not strictly grounded in acoustic evidence. 
As a result, the interface may become more predictive of the transcription while losing robustness to acoustic variation, increasing the likelihood of fluent but weakly grounded outputs under ambiguous or degraded conditions.

\paragraph{LLM-dominant uncertainty reduction.}
Another failure mode occurs when the encoder provides a weak or insufficiently structured interface. 
In this case, a larger portion of uncertainty is deferred to the LLM, making the decoding process more dependent on language priors. 
This behavior can be further understood through the Bayesian factorization
\begin{equation}
p(Y\mid Z) \propto p(Z \mid Y)\, p(Y).
\end{equation}
When $Z$ provides limited discrimination among candidate transcriptions, the likelihood term $p(Z\mid Y)$ becomes less informative, and the posterior is increasingly shaped by the prior $p(Y)$. 
Consequently, the model may generate outputs that are linguistically plausible but not fully supported by the speech signal.

\subsection{Rationale of the Decoupled Training Paradigm}

The above analysis motivates a training strategy that aligns uncertainty allocation with module capabilities. 
Phoneme-level pretraining encourages the encoder to form a compact and acoustically grounded interface by emphasizing pronunciation-level structure. 
Subsequent adaptation stages allow the adaptor and LLM to operate on this interface before full end-to-end coupling. 
After the interface has been sufficiently established, joint optimization can be introduced to refine the overall system. 
At this stage, interactions between modules primarily induce local adjustments in the representation manifold, improving alignment at the encoder--LLM interface without substantially altering the underlying representation structure.

From this perspective, hallucination can be understood as a consequence of how uncertainty reduction is distributed across modules. 
Maintaining an interface that is both compact and acoustically grounded preserves the intended division of labor between acoustic grounding and downstream language modeling. 
As a result, the encoder can absorb a larger portion of acoustically grounded uncertainty reduction, while leaving higher-level semantic disambiguation to the LLM. 
This capability-aligned allocation not only mitigates hallucination by reducing reliance on text-side priors, but also improves parameter efficiency by alleviating the burden on the LLM, enabling strong performance with a smaller model scale.

\end{document}